\newtheorem{Example}{Example}[section]
\newtheorem{Theorem}{Theorem}[section]
\newtheorem{Theorem/Definition}{Theorem/Definition}[section]
\newtheorem{Proposition}{Proposition}[section]
\newtheorem{Lemma}{Lemma}[section]
\newtheorem{Corollary}{Corollary}[section]
\newcommand{\pd}{\partial}
\newcommand{\bC}{{\mathbb C}}
\newcommand{\bP}{{\mathbb P}}
\newcommand{\bZ}{{\mathbb Z}}
\newcommand{\cF}{{\mathcal F}}
\newcommand{\half}{\frac{1}{2}}
\newcommand{\wB}{{\widehat B}}
\newcommand{\hf}{{\hat f}}
\newcommand{\hF}{{\hat F}}
\newcommand{\be}{\begin{equation}}
\newcommand{\ee}{\end{equation}}
\newcommand{\bea}{\begin{eqnarray}}
\newcommand{\ben}{\begin{eqnarray*}}
\newcommand{\een}{\end{eqnarray*}}
\newcommand{\eea}{\end{eqnarray}}
\DeclareMathOperator{\Pf}{Pf}
\definecolor{yellow}{rgb}{1,1,0}
\definecolor{orange}{rgb}{1,.7,0}
\definecolor{red}{rgb}{1,0,0}
\definecolor{green}{rgb}{0,1,1}
\definecolor{white}{rgb}{1,1,1}
\definecolor{A}{rgb}{.75,1,.75}
\theoremstyle{remark}
\newtheorem{Remark}{Remark}[section]
\begin{document}

\newtheorem{myDef}{Definition}
\newtheorem{thm}{Theorem}
\newtheorem{eqn}{equation}

\title[Connected Spin Double Hurwitz Numbers]
{Connected $(n,m)$-Point Functions of Diagonal $2$-BKP Tau-Functions,
and Spin Double Hurwitz Numbers}

\author{Zhiyuan Wang$^1$, $\quad$Chenglang Yang$^{1,2}$}
\address{\begin{flushleft}
1. School of Mathematical Sciences,
Peking University \\
2. Beijing International Center for Mathematical Research,
Peking University
\end{flushleft}
}
\email{zhiyuan19@math.pku.edu.cn, yangcl@pku.edu.cn}

\begin{abstract}

We derive an explicit formula for the connected $(n,m)$-point functions
associated to an arbitrary diagonal tau-function of the $2$-BKP hierarchy
using computation of neutral fermions and boson-fermion correspondence of type $B$,
and then apply this formula to the computation of connected spin double Hurwitz numbers.
This is the type $B$ analogue of [Z.Wang and C.Yang, arXiv:2210.08712 (2022)].

\end{abstract}

\maketitle


\section{Introduction}

This is a sequel of our previous work \cite{wy2}.
In that paper,
we have derived an explicit formula for connected $(n,m)$-point functions of
diagonal tau-functions of the 2d Toda lattice hierarchy
using boson-fermion correspondence,
and then applied that formula to compute the connected double Hurwitz numbers
and stationary Gromov-Witten invariants of $\bP^1$ relative to two points.
Now in this work,
we deal with the case of diagonal tau-functions of $2$-BKP hierarchy,
and apply the result to the computation of connected spin double Hurwitz numbers.

Hurwitz numbers \cite{hur} count the numbers of branched covers between Riemann surfaces
with specified ramification types.
They provide a typical example of the interactions among various branches of mathematics,
including representation theory of symmetric groups,
intersection theory on moduli spaces,
integrable hierarchies,
topological recursions,
and combinatorics,
see e.g. \cite{elsv1, elsv2, op, op2, pa, dij, gv, gjv, Ok1}.
In particular,
Okounkov \cite{Ok1} showed that the generating series of (disconnected) double Hurwitz numbers
is a tau-function of the 2d Toda lattice hierarchy \cite{ta, ut},
and derived a fermionic representation of this tau-function
in terms of the charged free fermions  $\psi_r,\psi_r^*$ ($r\in \bZ +\half$).
See \cite{djm, jm, sa, sw} for introductions of Kyoto School's approach to integrable hierarchies
and the boson-fermion correspondence.
Okounkov's fermionic representation is the starting point of our computations in\cite{wy2}.

In \cite{eop},
Eskin-Okounkov-Pandharipande studied a new type of Hurwitz numbers
called the spin Hurwitz numbers,
by introducing a spin structure or a theta characteristic on the Riemann surface.
They are known to be related to the representations of the Sergeev group \cite{le2, gu}.
Similar to the cases of ordinary Hurwitz numbers,
the generating series of (disconnected) spin Hurwitz numbers are also known to be controlled by
some integrable hierarchies, see e.g. \cite{le, gkl, mmn, mmno, gkls}.
In this cases the corresponding  hierarchy is the BKP hierarchy \cite{djkm},
and the the boson-fermion correspondence of type $B$ is formulated in terms of the neutral fermions
$\phi_m$ ($m\in \bZ$).
See e.g. \cite{jm, hb, ost, tu, va, yo} for more about the BKP hierarchy.

For an even number $r>0$,
spin double Hurwitz numbers (indexed by two strict partitions) with completed $(r+1)$-cycles
count the branched covers between Riemann surfaces with spin structures
such that the ramification types over two branch points are described by
these two partitions,
and the types over all other branch points are completed $(r+1)$-cycles.
Roughly speaking,
these completed cycles are ramifications of order $r + 1$,
plus a certain linear combination of lower order ramifications that indicate degenerations of coverings,
see e.g. \cite{mmn, gkl} for details.
In \cite{gkl},
Giacchetto-Kramer-Lewa\'nski proved the polynomiality of such spin Hurwitz numbers
and proposed a spectral curve which conjecturally generates spin Hurwitz numbers via topological recursion.
In particular, they found a fermionic representation of the generating series
of disconnected spin double Hurwitz numbers with $(r+1)$-completed cycles,
and represented the connected spin double Hurwitz numbers with $(r+1)$-completed cycles
as summations over some commutation patterns in \cite[\S 6]{gkl},
inspired by \cite{jo, ssz}.

Now in this work,
we compute the connected spin double Hurwitz numbers with $(r+1)$-completed cycles
using a different approach.
We will consider more generally an arbitrary diagonal tau-function of the $2$-BKP hierarchy,
and derive a formula for the connected $(n,m)$-point functions
using computations of neutral fermions and the boson-fermion correspondence of type $B$.
Our method is inspired by Zhou's formula for connected $n$-point functions
associated to a KP tau-function \cite{zhou1} in terms of affine coordinates \cite{zhou3, by}.
See \cite{wy} for an analogue of Zhou's formula for the BKP hierarchy.
Our main result in this work is as follows.
Let
\begin{equation*}
\tau_f^B (\bm t^+,\bm t^-)
 = \langle \Gamma_+^B (\bm t^+) \exp(\hf) \Gamma_-^B (\bm t^-)
\rangle
\end{equation*}
be a diagonal tau-function of the $2$-BKP hierarchy,
where
\begin{equation*}
\hf =\sum_{m>0} (-1)^m f(m) :\phi_m\phi_{-m}:,
\end{equation*}
and
$f: \bZ_{>0} \to \bC$ is an arbitrary function
defined on the set of all positive integers.
Then the free energy $\log \tau_f^B (\bm t^+,\bm t^-)$ can be computed
by the following formula for connected $(n,m)$-point functions
(see Theorem \ref{thm-main-conn-nmpt-2} and Corollary \ref{cor-vanish-2}):
\be
\label{eq-intro-main}
\begin{split}
&\sum_{\substack{j_1,\cdots,j_n>0:\text{ odd}\\k_1,\cdots,k_m >0:\text{ odd} }}
\frac{\pd^{n+m} \log\tau_f^B(\bm t^+,\bm t^-)}
{\pd t_{j_1}^+\cdots\pd t_{j_n}^+ \pd t_{k_1}^-\cdots\pd t_{k_m}^-}
\Big|_{\bm t=0} \cdot
\prod_{a=1}^n z_a^{-j_a} \prod_{b=1}^m z_{n+b}^{k_b}\\
=& -2^{n+m-1} \cdot
\Big[ \sum_{\text{$(n+m)$-cycles $\sigma$} }
\prod_{i=1}^{n+m} \xi( z_{\sigma(i)}, - z_{\sigma(i+1)}) \Big]_{\text{odd}},
\qquad\quad
\forall m,n >0,
\end{split}
\ee
where the summation in the left-hand side is over all
positive odd integers $j_1,\cdots,j_n$, $k_1,\cdots,k_m$;
and $[\cdot]_{\text{odd}}$ means taking the terms of odd degrees in every $z_i$.
And for $\sigma(i) < \sigma(i+1)$,
$\xi$ is given by:
\begin{equation*}
\xi( z_{\sigma(i)}, - z_{\sigma(i+1)})\\
= \begin{cases}
-\frac{1}{4} -\half \sum_{k=1}^\infty
e^{ f(k)}z_{\sigma(i)}^{-k} z_{\sigma(i+1)}^k ,
& \text{ if $\sigma(i)\leq n<\sigma(i+1)$;}\\
-\frac{1}{4} -\half \sum_{k=1}^\infty
z_{\sigma(i)}^{-k} z_{\sigma(i+1)}^k ,
& \text{ otherwise,}
\end{cases}
\end{equation*}
and $\xi( z_{\sigma(i)}, - z_{\sigma(i+1)})= - \xi( - z_{\sigma(i+1)}, z_{\sigma(i)}) $
for $\sigma(i) > \sigma(i+1)$.
Here we use the convention $\sigma(n+m+1)=\sigma(1)$.

It is worth mentioning that
there has been another formula calculating connected correlators of such tau-functions in literature,
see Alexandrov-Shadrin \cite[Theorem 4.1]{AS}.
They treated one of the two families $\bm t^+$ and $\bm t^-$ as time variables
and the other as parameters,
and their formula is of a similar fashion as ours but looks more complicated.
It will be interesting to compare these two formulas.

Then we apply our formula to the
connected spin double Hurwitz numbers with $(r+1)$-completed cycles
$h_{g;\mu^+,\mu^-}^{\circ, r,\vartheta}$.
Here the superscript $\vartheta$,
which simply indicates the spin structure
(theta characteristic) in the definition,
 is a notation to distinguish spin Hurwitz numbers
from ordinary Hurwitz numbers.
Using the fermionic representation given in \cite{gkl},
we know that by taking the function $f$ to be $f^{r,\vartheta}(k) = \beta k^{r+1}/(r+1)$,
we obtain the following formula
(see \S \ref{sec-spinH}):
\begin{equation*}
h_{\mu^+,\mu^-}^{\circ, r,\vartheta}(\beta) = -
\frac{2^{n+m-1}}{Z_{\mu^+} Z_{\mu^-}}
\text{Coeff}_{\prod\limits_{a=1}^n z_a^{-\mu_a^+} \prod\limits_{b=1}^m z_{n+b}^{\mu_b^-}}
\bigg(
\sum_{\text{$(n+m)$-cycles} }
\prod_{i=1}^{n+m} \xi( z_{\sigma(i)}, - z_{\sigma(i+1)})
\bigg),
\end{equation*}
where $\text{Coeff}$ means taking the coefficient, and
\begin{equation*}
h_{\mu^+,\mu^-}^{\circ, r,\vartheta}(\beta) =
\sum_b 2^{g-1} \beta^b
\frac{h_{g;\mu^+,\mu^-}^{\circ, r,\vartheta}}
{l(\mu^+)! l(\mu^-)!},
\end{equation*}
Here $b$ is determined by the Riemann-Hurwitz formula $b= (2g-2 +l(\mu^+) +l(\mu^-)) /r$.

The rest of this paper is arranged as follows.
In \S \ref{sec-pre} we recall the preliminaries of boson-fermion correspondence of type $B$.
In \S \ref{sec-ferm-2n2m} we compute the fermionic $(2n,2m)$-point functions
associated to a diagonal tau-function,
and then in \S \ref{sec-disconn} we compute the disconnected bosonic $(n,m)$-point functions.
We recall the relation of connected bosonic $(n,m)$-point functions and the free energy
in \S \ref{sec-conn-boson},
and prove the formula \eqref{eq-intro-main} in \S \ref{sec-main}.
Finally in \S \ref{sec-spinH} we apply \eqref{eq-intro-main}
to the connected spin double Hurwitz numbers.

\section{Preliminaries}
\label{sec-pre}

In this section,
we recall some preliminaries of the neutral fermions and
boson-fermion correspondence of type $B$.
For details,
see \cite{yo, djkm, jm}.

First we recall the neutral fermions.
Let $\{\phi_m\}_{m\in \bZ}$ be a family of operators
satisfying the following anti-commutation relations:
\be
\label{eq-anticomm}
[\phi_m, \phi_n]_+ =(-1)^m \delta_{m+n,0},
\ee
where the bracket $[\cdot,\cdot]_+$ is defined by $[a,b]_+ = ab+ba$.
These operators are called the neutral fermions.
In particular, we have $\phi_0^2=\half$,
and $\phi_n^2=0$ for every $n\not=0$.

The fermionic Fock space $\cF_B$ of type $B$ is the $\bC$-vector space
of all formal (infinite) summations
\begin{equation*}
\sum
c_{k_1,\cdots,k_n}
\phi_{k_1} \phi_{k_2} \cdots \phi_{k_n} |0\rangle,
\qquad
c_{k_1,\cdots,k_n} \in\bC,
\end{equation*}
over $n\geq 0$ and $k_1>\cdots >k_n \geq 0$,
where $|0\rangle$ is a vector (called the fermionic vacuum vector) satisfying:
\be
\label{eq-B-anni}
\phi_i |0\rangle = 0,
\qquad \forall i <0.
\ee
The operators $\{\phi_n\}_{n\geq 0}$ are called the fermionic creators,
and $\{\phi_n\}_{n<0}$ are called the fermionic annihilators.
The Fock space $\cF_B$ can be decomposed as a direct sum of even and odd parts as follows:
\begin{equation*}
\cF_B = \cF_B^0 \oplus \cF_B^1,
\end{equation*}
where $\cF_B^0$ and $\cF_B^1$ are the subspaces with
even and odd numbers of the generators $\{\phi_i\}_{i\geq 0}$ respectively.

Recall that a partition of an integer $n$ is a sequence of integers $\mu=(\mu_1,\cdots,\mu_l)$
such that $\mu_1\geq \cdots \geq\mu_l>0$ and $|\mu|:=\mu_1+\cdots+\mu_n = n$.
A partition $\mu$ is called strict
if $\mu_1>\mu_2\cdots>\mu_l>0$.
The set of all strict partitions is denoted by $DP$,
and here we allow the empty partition $(\emptyset) \in DP$ of length zero.

The even part $\cF_B^0$ of the fermionic Fock space has a natural basis $\{|\mu\rangle\}_{\mu\in DP}$
indexed by all strict partitions.
Let $\mu\in DP $ be a strict partition $\mu= (\mu_1>\cdots >\mu_n > 0) $,
then then vector $|\mu\rangle$ is defined by:
\be
|\mu\rangle
= \begin{cases}
\phi_{\mu_1}\phi_{\mu_2}\cdots \phi_{\mu_n}|0\rangle, &\text{ for $n$ even;}\\
\sqrt{2}\cdot \phi_{\mu_1}\phi_{\mu_2}\cdots \phi_{\mu_n} \phi_0 |0\rangle, &\text{ for $n$ odd}.
\end{cases}
\ee
In particular, one has $|(\emptyset)\rangle =|0\rangle$ for the empty partition.

The dual Fock space $\cF_B^*$ is defined to be the vector space spanned by:
\begin{equation*}
\langle 0|
\phi_{k_n}  \cdots \phi_{k_2} \phi_{k_1},
\qquad
k_1 < k_2 < \cdots < k_n \leq 0,
\quad n\geq 0,
\end{equation*}
where $\langle 0|$ is a vector satisfying:
\be
\label{eq-B-anni-2}
\langle 0| \phi_i = 0,
\qquad \forall i >0.
\ee
Then there is a nondegenerate pairing $\cF_B^* \times \cF_B \to \bC$ determined by
the conditions \eqref{eq-anticomm}, \eqref{eq-B-anni}, \eqref{eq-B-anni-2},
and the requirements
$\langle 0 | 0 \rangle =1$ and $\langle 0|\phi_0 | 0\rangle =0$.
Then for an arbitrary sequence $k_1>k_2>\cdots >k_n\geq 0$,
one has:
\be
\label{eq-basic-VEV}
\langle 0 | \phi_{-k_n}\cdots \phi_{-k_1}
\phi_{k_1}\cdots \phi_{k_n} |0\rangle = \begin{cases}
(-1)^{k_1+\cdots +k_n}, &\text{ if $k_n\not=0$;}\\
\half\cdot (-1)^{k_1+\cdots +k_{n-1}},&\text{ if $k_n=0$.}
\end{cases}
\ee
In general,
the vacuum expectation value of a product of neutral fermions
can be computed using Wick's Theorem:
\begin{equation*}
\langle 0| \phi_{i_1}\phi_{i_2}\cdots \phi_{i_{2n}}|0\rangle
=\sum_{\substack{(p_1,q_1,\cdots,p_n,q_n)\\p_k<q_k, \quad p_1<\cdots<p_n}}
\text{sgn}(p,q)\cdot \prod_{j=1}^n
\langle 0| \phi_{i_{p_j}}\phi_{i_{q_j}} |0\rangle,
\end{equation*}
where $(p_1,q_1,\cdots,p_n,q_n)$ is a permutation of $(1,2,\cdots,2n)$,
and $\text{sgn}(p,q)$ denotes its sign ($\text{sgn}=1$ for an even permutation, and $\text{sgn}=-1$ for an odd one).
In what follows,
we will denote by
\begin{equation*}
\langle A \rangle = \langle 0 |A|0\rangle
\end{equation*}
the vacuum expectation value of an operator $A$ on the fermionic Fock space.

The normal-ordered product $:\phi_i\phi_j:$ of two neutral fermions
is defined by:
\be
\label{eq-def-normal-order}
:\phi_i\phi_j: = \phi_i\phi_j -\langle \phi_i\phi_j \rangle.
\ee
In particular,
one has $:\phi_0^2: = \phi_0^2 -\langle 0|\phi_0^2|0\rangle=0$.
The relation \eqref{eq-anticomm} is equivalent to the following
operator product expansion:
\be
\label{eq-normal-fermfield}
\phi(w)\phi(z)
= :\phi(w)\phi(z): +i_{w,z}\frac{w-z}{2(w+z)},
\ee
where $\phi(z)$ is the fermionic field:
\be
\label{eq-ferm-field}
\phi(z) = \sum_{i\in \bZ} \phi_i z^i,
\ee
and $i_{w,z}$ means formally expanding on $\{|w|>|z|\}$, i.e.,
\begin{equation*}
i_{w,z}\frac{w-z}{2(w+z)}= \half +
\sum_{j=1}^\infty (-1)^{j} w^{-j} z^j.
\end{equation*}

Now we recall the construction of bosonic operators of type $B$.
Let $n\in 2\bZ+1$ be an odd integer,
and define the Hamiltonian $H_n$ by:
\be
\label{eq-def-boson}
H_n = \half \sum_{i\in \bZ} (-1)^{i+1} \phi_i\phi_{-i-n}.
\ee
One can check that they satisfy the following commutation relation:
\be
[H_n,H_m] = H_nH_m-H_mH_n= \frac{n}{2}\cdot \delta_{m+n,0},
\qquad
\forall n,m \text{ odd}.
\ee
The operators $H_n$ are called the bosons of type $B$.
And one has:
\begin{equation*}
H_n |0\rangle = 0,
\qquad\qquad
\forall n> 0.
\end{equation*}

Denote by $H(z)$ the generating series of these bosons:
\be
H(z)= \sum_{n\in \bZ: \text{ odd}} H_n z^{-n}
\ee
Notice that one can also define $H_{2k}$ using \eqref{eq-def-boson},
and the anti-commutation relation \eqref{eq-anticomm} implies
$H_{2k}=0$ for every $k\not= 0$.
In this sense,
one easily finds that:
\be
H(z) = -\half :\phi(-z)\phi(z):.
\ee

Now we recall the boson-fermion correspondence of type $B$.
Let $\bm t=(t_1,t_3,t_5,t_7,\cdots)$ be a family of formal variables,
and define:
\be
\label{eq-def-Gamma+-}
\begin{split}
&\Gamma_+^B(\bm t) = \exp\Big( \sum_{n>0: \text{ odd}} t_n H_n \Big)
= \exp\Big(\half \sum_{n>0: \text{ odd}} t_n \sum_{i\in \bZ} (-1)^{i+1} \phi_i\phi_{-i-n}\Big),\\
&\Gamma_-^B(\bm t) = \exp\Big( \sum_{n>0: \text{ odd}} t_n H_{-n} \Big)
=\exp\Big( \half \sum_{n>0: \text{ odd}} t_n \sum_{i\in \bZ} (-1)^{i+1} \phi_i\phi_{-i+n}\Big),
\end{split}
\ee
then the boson-fermion correspondence of type $B$ is:
\begin{Theorem}
[\cite{djkm}]
The following map is a linear isomorphism:
\begin{equation*}
\sigma_B :\cF_B \to \bC[\![w;t_1,t_2,\cdots]\!]/\sim,
\qquad
|U\rangle \mapsto \sum_{i=0}^1 \omega^i\cdot \langle i |\Gamma_+^B(\bm t)|U\rangle,
\end{equation*}
where $\omega^2\sim 1$,
and $\langle 1| =\sqrt{2}\langle 0|\phi_0 \in (\cF_B^1)^*$.
Under this isomorphism,
one has:
\be
\label{eq-bfcor-boson}
\sigma_B (H_n |U\rangle) = \frac{\pd}{\pd t_n} \sigma_B(|U\rangle),
\qquad
\sigma_B (H_{-n}|U\rangle)=\frac{n}{2} t_n \cdot \sigma_B(|U\rangle),
\ee
for every odd $n>0$.
Moreover,
\be
\label{eq-bfcor-ferm}
\sigma_B (\phi(z)|U\rangle) =
\frac{1}{\sqrt{2}} \omega\cdot e^{\xi(\bm t,z)} e^{-\xi (\tilde\pd ,z^{-1})} \sigma_B(|U\rangle),
\ee
where
$\xi(\bm t,z) = \sum_{n>0\text{ odd}} t_{n}z^{n}$
and $\tilde\pd = (2\pd_{t_{ 1}},\frac{2}{3}\pd_{t_{ 3}},\frac{2}{5}\pd_{t_{ 5}},\cdots)$.
\end{Theorem}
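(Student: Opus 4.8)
The plan is to establish the three assertions separately: the Heisenberg intertwining relations \eqref{eq-bfcor-boson}, the bijectivity of $\sigma_B$, and the vertex operator formula \eqref{eq-bfcor-ferm}. I expect the intertwining relations to be routine and to be used throughout, so I would prove them first. To get $\sigma_B(H_n|U\rangle)=\partial_{t_n}\sigma_B(|U\rangle)$ for odd $n>0$, note that all $H_m$ with $m>0$ commute (since $[H_n,H_m]=\tfrac n2\delta_{n+m,0}$ vanishes there), so $\partial_{t_n}\Gamma_+^B(\bm t)=H_n\Gamma_+^B(\bm t)=\Gamma_+^B(\bm t)H_n$, and pairing with $\langle i|$ gives the claim. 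For $H_{-n}$ the only surviving bracket with $X=\sum_{m>0\ \text{odd}}t_mH_m$ is $[X,H_{-n}]=\tfrac n2 t_n$, so $\Gamma_+^B(\bm t)H_{-n}\Gamma_+^B(\bm t)^{-1}=H_{-n}+\tfrac n2 t_n$; combined with $\langle i|H_{-n}=0$ for $i\in\{0,1\}$ (which is the transpose of $H_n|0\rangle=0$, plus a one-line computation of $[\phi_0,H_{-n}]$ in the $i=1$ case) this yields $\sigma_B(H_{-n}|U\rangle)=\tfrac n2 t_n\,\sigma_B(|U\rangle)$.

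For bijectivity I would use $\cF_B=\cF_B^0\oplus\cF_B^1$. The even part $\cF_B^0$ is the Fock module of the Heisenberg algebra $\bigoplus_{n\ \text{odd}}\bC H_n$ with cyclic vector $|0\rangle$, and it is freely generated over $\bC[H_{-1},H_{-3},\dots]$ by $|0\rangle$; for this I would compare graded dimensions, using that $\dim(\cF_B^0)_N$ equals the number of strict partitions of $N$ (here the invertible operator $\phi_0$, with $\phi_0^2=\half$, identifies $\cF_B^0$ with $\cF_B^1$ and accounts for the strict partitions of odd length), which by Euler's identity $\prod_{k\ge1}(1+q^k)=\prod_{k\ \text{odd}}(1-q^k)^{-1}$ equals $\dim\bC[t_1,t_3,\dots]_N$. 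Then by the intertwining relations and $\sigma_B(|0\rangle)=1$, the map $\sigma_B$ carries the monomial basis $H_{-n_1}\cdots H_{-n_k}|0\rangle$ of $\cF_B^0$ bijectively onto the monomials of $\bC[t_1,t_3,\dots]$, and a parallel argument identifies $\cF_B^1=\phi_0\cF_B^0$ with $\omega\cdot\bC[t_1,t_3,\dots]$, the relation $\omega^2\sim1$ recording the $\bZ/2$-parity; hence $\sigma_B$ is a linear isomorphism.

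For the vertex operator formula the key step is the identity
\[
[H_n,\phi(z)]=z^n\phi(z)\qquad\text{for all odd }n,
\]
which I would obtain directly from \eqref{eq-anticomm} by checking $[H_n,\phi_k]=\phi_{k-n}$. This already gives $\Gamma_+^B(\bm t)\phi(z)\Gamma_+^B(\bm t)^{-1}=e^{\xi(\bm t,z)}\phi(z)$, hence the prefactor $e^{\xi(\bm t,z)}$. It then remains to prove the operator identity $\sigma_B\,\phi(z)\,\sigma_B^{-1}=\tfrac1{\sqrt2}\,\omega\,V(z)$ with $V(z)=e^{\xi(\bm t,z)}e^{-\xi(\tilde\partial,z^{-1})}$, and for this I would compare both sides as operators on the bosonic side. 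Using \eqref{eq-bfcor-boson} together with $[H_{\pm n},\phi(z)]=z^{\pm n}\phi(z)$ one finds that $O_1:=\sigma_B\phi(z)\sigma_B^{-1}$ satisfies $[\partial_{t_n},O_1]=z^nO_1$ and $[t_n,O_1]=\tfrac2n z^{-n}O_1$ for odd $n>0$; the same braiding relations hold for $O_2:=\tfrac1{\sqrt2}\omega V(z)$ by the elementary identities $[\partial_{t_n},e^{\xi(\bm t,z)}]=z^ne^{\xi(\bm t,z)}$ and $[t_n,e^{-\xi(\tilde\partial,z^{-1})}]=\tfrac2n z^{-n}e^{-\xi(\tilde\partial,z^{-1})}$. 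Finally $O_1$ and $O_2$ agree on $\sigma_B(|0\rangle)=1$: a direct evaluation using $\Gamma_+^B(\bm t)|0\rangle=|0\rangle$, $\langle 0|\phi_0|0\rangle=0$ and $\langle 0|\phi_0^2|0\rangle=\half$ gives $\sigma_B(\phi(z)|0\rangle)=\tfrac1{\sqrt2}\omega\,e^{\xi(\bm t,z)}=O_2\cdot1$. Since $\bC[t_1,t_3,\dots]$ is irreducible under the algebra generated by $\{t_n,\partial_{t_n}\}$ --- an operator that annihilates $1$ and obeys the above braiding relations annihilates every monomial $t_{n_1}\cdots t_{n_k}$ after being commuted to the right --- it follows that $O_1=O_2$, which is \eqref{eq-bfcor-ferm}.

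The hard part will be the last step, and in particular producing the annihilation tail $e^{-\xi(\tilde\partial,z^{-1})}$ with the exact normalization $\tilde\partial=(2\partial_{t_1},\tfrac23\partial_{t_3},\tfrac25\partial_{t_5},\dots)$: the coefficients $2/n$ are forced precisely by the central term $\tfrac n2$ in $[H_n,H_{-n}]$ and by the $\sqrt2$'s in the definitions of $|\mu\rangle$ and $\langle 1|$, and are easy to misplace. A secondary point requiring care is the parity bookkeeping: because $\phi(z)$ interchanges $\cF_B^0$ and $\cF_B^1$, the operator $O_1$ has two ``off-diagonal'' components on the target, so one should verify the formula on $|1\rangle=\sqrt2\,\phi_0|0\rangle$ (again using $\phi_0^2=\half$) as well as on $|0\rangle$ to pin both of them down.
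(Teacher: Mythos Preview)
The paper does not prove this theorem: it is quoted from \cite{djkm} as a preliminary result in \S\ref{sec-pre} and used without justification. So there is no ``paper's own proof'' to compare against.

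Your argument is nonetheless a correct, self-contained proof. The intertwining relations \eqref{eq-bfcor-boson} are derived exactly as they should be, and the verification of $\langle i|H_{-n}=0$ for $i=0,1$ is clean (for $i=1$ your commutator $[H_{-n},\phi_0]=\phi_n$ is right, since $[H_n,\phi_k]=\phi_{k-n}$). The bijectivity step via graded dimensions and Euler's identity $\prod_{k\ge1}(1+q^k)=\prod_{k\ \mathrm{odd}}(1-q^k)^{-1}$ is the standard route and works as written; note only that since the paper allows infinite formal sums in $\cF_B$, the target is the completion $\bC[\![t_1,t_3,\dots]\!]$, but your basis-to-basis argument extends to the completion without change. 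For \eqref{eq-bfcor-ferm}, your strategy of matching the braiding relations $[\partial_{t_n},O]=z^nO$, $[t_n,O]=\tfrac{2}{n}z^{-n}O$ and then checking agreement on the two cyclic vectors $1$ and $\omega$ is sound: the difference $D=O_1-O_2$ satisfies $Dt_n=(t_n-\tfrac{2}{n}z^{-n})D$, so $D\cdot 1=0$ and $D\cdot\omega=0$ propagate to all monomials by induction, exactly as you indicate. Your parity remark is to the point---because $\phi(z)$ swaps $\cF_B^0$ and $\cF_B^1$, one really does need both checks, and both go through using $\langle 0|\phi_0|0\rangle=0$ and $\phi_0^2=\tfrac12$.
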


\section{Diagonal Tau-Functions and Fermionic $(2n,2m)$-Point Functions}
\label{sec-ferm-2n2m}

In this section,
we first recall the construction of diagonal tau-functions $\tau_f^B$
associated to a function $f:\bZ_{>0} \to \bC$,
and then compute the fermionic $(2n,2m)$-point functions
using Wick's Theorem for neutral fermions..

Let $\Gamma_\pm^B$ be the operators defined by \eqref{eq-def-Gamma+-},
and let $\bm t^\pm = (t_1^\pm,t_3^\pm,t_5^\pm,\cdots)$ be two families of formal variables.
A diagonal tau-function is a formal power series in $\bm t^\pm$ of the following form:
\be
\label{eq-def-tauf}
\tau_f^B (\bm t^+,\bm t^-)
 = \langle \Gamma_+^B (\bm t^+) \exp(\hf) \Gamma_-^B (\bm t^-)
\rangle,
\ee
where $\hf$ is the following operator on the fermionic Fock space:
\be
\hf =\sum_{m>0} (-1)^m f(m) :\phi_m\phi_{-m}:
= \sum_{m>0} (-1)^m f(m) \phi_m\phi_{-m},
\ee
and
\be
f: \bZ_{>0} \to \bC
\ee
is an arbitrary function on the set of positive integers.
One can check that the operator $e^\hf$ satisfies the Hirota bilinear equation:
\begin{equation*}
[e^\hf \otimes e^\hf , \sum_{m\in \bZ} (-1)^m \phi_m\otimes \phi_{-m}] = 0,
\end{equation*}
thus $\tau_f^B$ is a tau-function of the $2$-BKP hierarchy,
see e.g. \cite[Appendix]{ost}.

Let $z_1,z_2,\cdots,z_{2n+2m}$
be a family of formal variables.
The fermionic $(2n,2m)$-point function
associated to a diagonal tau-function $\tau_f^B$
is defined to be the following vacuum expectation value:
\be
\label{eq-de-ferm-nmpt}
\langle \phi(z_1)\phi(z_2)\cdots \phi(z_{2n})
e^\hf
\phi(z_{2n+1})\phi(z_{2n+2})\cdots \phi(z_{2n+2m}) \rangle.
\ee
Notice that by \eqref{eq-B-anni} we have:
\be
\exp(\hf) |0\rangle = |0\rangle,
\qquad\qquad
\langle 0| \exp(-\hf) = \langle 0|,
\ee
thus \eqref{eq-de-ferm-nmpt} can be rewritten as:
\be
\label{eq-de-ferm-nmpt-2}
\langle \phi_f(z_1)\phi_f(z_2)\cdots \phi_f(z_{2n})
\phi(z_{2n+1})\phi(z_{2n+2})\cdots \phi(z_{2n+2m}) \rangle,
\ee
where $\phi_f(z) $ is defined by:
\be
\phi_f(z) = e^{-\hf} \phi(z) e^{\hf}.
\ee

\begin{Lemma}
\label{lem-conj-hf-phi}
The field $\phi_f(z) = e^{-\hf} \phi(z) e^{\hf}$ is given by:
\be
\phi_f (z) =\sum_{k<0}e^{ f(-k)} \phi_k z^k
+\phi_0 + \sum_{k>0}e^{- f(k)} \phi_k z^k.
\ee
\end{Lemma}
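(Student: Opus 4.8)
The plan is to reduce everything to the adjoint action of $\hf$ on the individual modes. Since $\phi(z)=\sum_{i\in\bZ}\phi_i z^i$ and conjugation by $e^{\hf}$ is linear, it suffices to show that $e^{-\hf}\phi_i e^{\hf}=c_i\,\phi_i$ with $c_i=e^{-f(i)}$ for $i>0$, $c_i=e^{f(-i)}$ for $i<0$, and $c_0=1$, and then re-sum over $i$. I would begin from the standard identity $e^{-A}Be^{A}=e^{-\operatorname{ad}_A}(B)=\sum_{k\geq 0}\frac{(-1)^k}{k!}\operatorname{ad}_A^k(B)$, applied with $A=\hf$ and $B=\phi_i$. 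Although $\hf=\sum_{m>0}(-1)^m f(m)\,\phi_m\phi_{-m}$ is an infinite sum of operators, this causes no difficulty: as the commutator below shows, for a fixed $i$ only the single term with $m=|i|$ fails to commute with $\phi_i$, so the whole argument is a finite, purely algebraic manipulation performed mode by mode.

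The one computation to carry out is $[\hf,\phi_i]$. Using the elementary identity $[ab,c]=a[b,c]_+-[a,c]_+\,b$ together with the anticommutation relation \eqref{eq-anticomm} (and $(-1)^{-m}=(-1)^m$), one finds
\[
[\phi_m\phi_{-m},\phi_i]=(-1)^m\bigl(\delta_{i,m}\,\phi_m-\delta_{i,-m}\,\phi_{-m}\bigr).
\]
Multiplying by $(-1)^m f(m)$ and summing over $m>0$, the sign $(-1)^m(-1)^m=1$ cancels, giving $[\hf,\phi_i]=f(i)\,\phi_i$ for $i>0$, $[\hf,\phi_i]=-f(-i)\,\phi_i$ for $i<0$, and $[\hf,\phi_0]=0$. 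Hence each $\phi_i$ is an eigenvector of $\operatorname{ad}_{\hf}$ with eigenvalue $\lambda_i$ (equal to $f(i)$, $-f(-i)$, or $0$ according to the sign of $i$), so that $\operatorname{ad}_{\hf}^k(\phi_i)=\lambda_i^k\phi_i$ and $e^{-\operatorname{ad}_{\hf}}(\phi_i)=e^{-\lambda_i}\phi_i$.

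Plugging $\lambda_i$ back in, $e^{-\hf}\phi_i e^{\hf}=e^{-f(i)}\phi_i$ for $i>0$, $=e^{f(-i)}\phi_i$ for $i<0$, and $=\phi_0$ for $i=0$; summing $\phi_f(z)=\sum_i e^{-\hf}\phi_i e^{\hf}\,z^i$ then yields exactly the asserted formula. I do not expect a serious obstacle here: the only point requiring care is the legitimacy of manipulating the formally infinite operator $e^{\hf}$, which is settled by the mode-by-mode reduction above, and the remaining work is just the bookkeeping of signs in the two cases $i>0$ and $i<0$.
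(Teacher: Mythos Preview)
Your proposal is correct and follows essentially the same route as the paper: both compute $[\hf,\phi_i]$ via the identity $[ab,c]=a[b,c]_+-[a,c]_+ b$, observe that each $\phi_i$ is an eigenvector of $\operatorname{ad}_{\hf}$, and then apply the Baker--Campbell--Hausdorff expansion $e^{-\hf}\phi_i e^{\hf}=e^{-\operatorname{ad}_{\hf}}(\phi_i)$ to obtain the scalar factors. Your remark that only the single term $m=|i|$ contributes, so the computation is effectively finite, is a small but helpful addition not made explicit in the paper.
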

\begin{proof}
Notice that by the anti-commutation relation \eqref{eq-anticomm} we have:
\begin{equation*}
\begin{split}
[\hf,\phi_k] = & \sum_{j>0} (-1)^j f(j)
[\phi_j \phi_{-j},\phi_k]\\
=& \sum_{j>0} (-1)^j f(j)
\big(\phi_j [ \phi_{-j},\phi_k]_+
- [\phi_j , \phi_k]_+ \phi_{-j}\big)\\
=& \sum_{j>0} f(j)
\big( \delta_{j,k} \phi_j - \delta_{j+k,0}\phi_{-j} \big)
\end{split},
\end{equation*}
thus:
\begin{equation*}
[\hf,\phi_k]
=\begin{cases}
f(k) \phi_k, & \text{ if $k>0$;}\\
- f(-k) \phi_k, & \text{ if $k<0$;}\\
0, & \text{ if $k=0$.}\\
\end{cases}
\end{equation*}
Then by the Baker-Campbell-Hausdorff formula we have:
\begin{equation*}
\begin{split}
e^{-\hf} \phi_k e^\hf
=& \phi_k - [\hf,\phi_k] + \frac{1}{2!}[\hf,[\hf,\phi_k]]
+ \frac{1}{3!} [\hf,[\hf,[\hf,\phi_k]]] -\cdots\\
=&\begin{cases}
e^{- f(k)} \phi_k, & \text{ if $k>0$;}\\
e^{ f(-k)} \phi_k, & \text{ if $k<0$;}\\
\phi_0, & \text{ if $k=0$.}
\end{cases}
\end{split}
\end{equation*}
Thus the conclusion holds.
\end{proof}

Then we have the following:
\begin{Theorem}
The fermionic $(2n,2m)$-point function \eqref{eq-de-ferm-nmpt} equals to
the Pfaffian:
\be
\label{eq-thm-ferm2n2m}
\langle \phi(z_1)\cdots \phi(z_{2n})
e^\hf
\phi(z_{2n+1})\cdots \phi(z_{2n+2m}) \rangle
= \Pf(\wB_{i,j}),
\ee
where $(\wB_{i,j})$ is an  anti-symmetric matrix of size $(2n+2m)\times (2n+2m)$,
whose upper-triangular part is given by:
\be
\label{eq-thm-ferm2n2m-defB}
\wB_{i,j} = \begin{cases}
\half +
\sum\limits_{k=1}^\infty (-1)^k e^{ f(k)} z_i^{-k}z_j^k,
& \text{ if $i\leq 2n<j$;}\\
i_{z_i,z_j} \frac{z_i-z_j}{2(z_i+z_j)},
& \text{ if $i<j\leq 2n$ or $2n+1\leq i<j$.}
\end{cases}
\ee
\end{Theorem}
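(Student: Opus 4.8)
The plan is to reduce the $(2n,2m)$-point function to a vacuum expectation value of a product of $2n+2m$ neutral-fermion fields, apply Wick's Theorem for neutral fermions (which produces a Pfaffian), and then read off the three types of pairwise contractions.

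First I would use $\exp(\hf)|0\rangle=|0\rangle$ and $\langle 0|\exp(-\hf)=\langle 0|$ to rewrite the left-hand side of \eqref{eq-thm-ferm2n2m} in the form \eqref{eq-de-ferm-nmpt-2}, i.e. as $\langle\phi_f(z_1)\cdots\phi_f(z_{2n})\,\phi(z_{2n+1})\cdots\phi(z_{2n+2m})\rangle$, where $\phi_f$ is given explicitly by Lemma \ref{lem-conj-hf-phi}. Expanding each field $\phi_f(z_i)$ (for $i\le 2n$) and $\phi(z_j)$ (for $j>2n$) into its modes $\phi_k$ and applying the stated Wick's Theorem to each resulting monomial in the modes --- which is legitimate since everything is a formal power series in the $z_i$ and Wick's Theorem is multilinear in the modes --- one obtains that \eqref{eq-de-ferm-nmpt-2} equals the signed sum over pair partitions, that is, $\Pf(\wB_{i,j})$, where $(\wB_{i,j})$ is the anti-symmetric $(2n+2m)\times(2n+2m)$ matrix whose entry for $i<j$ is the two-point function of the $i$-th and the $j$-th field in the above ordering.

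It then remains to compute these three kinds of two-point functions. When $2n+1\le i<j$ both fields are bare, and taking vacuum expectation values in the operator product expansion \eqref{eq-normal-fermfield} gives $\wB_{i,j}=\langle\phi(z_i)\phi(z_j)\rangle=i_{z_i,z_j}\tfrac{z_i-z_j}{2(z_i+z_j)}$, since the normal-ordered part has vanishing expectation value. When $i<j\le 2n$ both fields are of the form $\phi_f$; because conjugation by $e^{\hf}$ fixes both the vacuum and the covacuum, $\wB_{i,j}=\langle\phi_f(z_i)\phi_f(z_j)\rangle=\langle\phi(z_i)\phi(z_j)\rangle$, the same expression. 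Finally, when $i\le 2n<j$, I would substitute the explicit form of $\phi_f(z_i)$ from Lemma \ref{lem-conj-hf-phi} and $\phi(z_j)=\sum_{k\in\bZ}\phi_k z_j^{k}$, and use the basic contractions $\langle\phi_0^2\rangle=\tfrac12$ and $\langle\phi_{-k}\phi_k\rangle=(-1)^k$ for $k>0$, all other $\langle\phi_a\phi_b\rangle$ being zero (a consequence of \eqref{eq-basic-VEV} together with the anti-commutation relation \eqref{eq-anticomm}): only the negative modes of $\phi_f(z_i)$, which carry the coefficient $e^{f(k)}z_i^{-k}$ with $k>0$, pair nontrivially with the positive modes $\phi_k z_j^{k}$ of $\phi(z_j)$, together with the single contraction $\langle\phi_0^2\rangle$. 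This produces $\wB_{i,j}=\tfrac12+\sum_{k\ge1}(-1)^k e^{f(k)}z_i^{-k}z_j^{k}$, which matches \eqref{eq-thm-ferm2n2m-defB}.

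I expect the only point requiring genuine care to be the bookkeeping in this last case: one must verify that the surviving coefficient is $e^{f(k)}$ rather than $e^{-f(k)}$ or $1$. This is because the coefficient of $\phi_{-k}$ (for $k>0$) inside $\phi_f(z_i)$ is $e^{f(k)}$ by Lemma \ref{lem-conj-hf-phi}, and it is exactly those modes $\phi_{-k}$ with $k>0$ (together with $\phi_0$) that contract nontrivially with the modes of $\phi(z_j)$; the coefficients $e^{-f(k)}$ attached to the positive modes of $\phi_f(z_i)$ never enter. Everything else --- the passage from modes to fields, and the elementary identity $i_{z_i,z_j}\tfrac{z_i-z_j}{2(z_i+z_j)}=\tfrac12+\sum_{k\ge1}(-1)^k z_i^{-k}z_j^k$ --- is routine.
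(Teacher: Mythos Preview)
Your proposal is correct and follows essentially the same route as the paper: rewrite the correlator using $\phi_f(z)=e^{-\hf}\phi(z)e^{\hf}$, apply Wick's Theorem for neutral fermions to obtain the Pfaffian, and then compute the three types of pairwise contractions. The only (harmless) difference is that for the case $i<j\le 2n$ you use the one-line observation $\langle\phi_f(z_i)\phi_f(z_j)\rangle=\langle 0|e^{-\hf}\phi(z_i)\phi(z_j)e^{\hf}|0\rangle=\langle\phi(z_i)\phi(z_j)\rangle$, whereas the paper expands both $\phi_f$'s via Lemma~\ref{lem-conj-hf-phi} and computes directly; both yield the same result.
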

\begin{proof}
Since $\phi_f(z)$ and $\phi (z)$ are both linear in $\{\phi_i\}$,
one can apply Wick's Theorem to the fermionic $(2n,2m)$-point function \eqref{eq-de-ferm-nmpt-2}
and get:
\be
\label{eq-wicksum-ferm2n2m}
\begin{split}
& \langle \phi_f(z_1)\phi_f(z_2)\cdots \phi_f(z_{2n})
\phi(z_{2n+1})\phi(z_{2n+2})\cdots \phi(z_{2n+2m}) \rangle \\
=&
\sum_{\substack{(p_1,q_1,\cdots,p_{n+m},q_{n+m})\\p_k<q_k, \quad p_1<\cdots<p_{n+m}}}
\text{sgn}(p,q)\cdot \prod_{j=1}^{n+m}
\langle  \tilde\phi(z_{p_j}) \tilde\phi(z_{q_j}) \rangle,
\end{split}
\ee
where $(p_1,q_1,\cdots,p_{n+m},q_{n+m})$ is a permutation of $(1,2,\cdots,2n+2m)$,
and
\be
\tilde\phi(z_i)=\begin{cases}
\phi_f (z_i), & \text{ if $1\leq i\leq 2n$;}\\
\phi (z_i), & \text{ if $2n+1\leq i \leq 2n+2m$.}
\end{cases}
\ee
This is equivalent to say that \eqref{eq-de-ferm-nmpt-2} equals to a Pfaffian:
\begin{equation*}
\langle \phi_f(z_1)\cdots \phi_f(z_{2n})
\phi(z_{2n+1})\cdots \phi(z_{2n+2m}) \rangle = \Pf
(\wB_{i,j})_{1\leq i,j\leq 2n+2m},
\end{equation*}
where the $(2n+2m)\times (2n+2m)$ matrix $(\wB_{i,j})$ is given by:
\be
\wB_{i,j} = \begin{cases}
\langle \tilde\phi(z_i)\tilde\phi(z_j) \rangle,
& \text{ if $1\leq i<j\leq 2n+2m$;}\\
-\langle \tilde\phi(z_j)\tilde\phi(z_i) \rangle,
& \text{ if $1\leq j<i\leq 2n+2m$;}\\
0, & \text{ if $1\leq i=j \leq 2n+2m$.}
\end{cases}
\ee
Notice that for $2n+1 \leq i<j\leq 2n+2m$, we have:
\begin{equation*}
\langle \tilde\phi(z_i)\tilde\phi(z_j) \rangle
= \langle \phi(z_i)\phi(z_j) \rangle
= \half +\sum_{k=1}^\infty (-1)^k z_i^{-k} z_j^k
= i_{z_i,z_j} \frac{z_i-z_j}{2(z_i+z_j)};
\end{equation*}
and for $1 \leq i \leq 2n < j\leq 2n+2m$, by Lemma \ref{lem-conj-hf-phi} we have:
\begin{equation*}
\begin{split}
\langle \tilde\phi(z_i)\tilde\phi(z_j) \rangle
=& \big\langle \big(\sum_{k<0}e^{ f(-k)} \phi_k z_i^k
+\phi_0 + \sum_{k>0}e^{- f(k)} \phi_k z_i^k \big)\phi(z_j) \big\rangle\\
=& \half +
\sum_{k=1}^\infty (-1)^k e^{ f(k)} z_i^{-k}z_j^k;
\end{split}
\end{equation*}
and for $1 \leq i < j\leq 2n$, by Lemma \ref{lem-conj-hf-phi} we have:
\begin{equation*}
\begin{split}
\langle \tilde\phi(z_i)\tilde\phi(z_j) \rangle
=& \big\langle \big(\sum_{k<0}e^{ f(-k)} \phi_k z_i^k
+\phi_0 + \sum_{k>0}e^{- f(k)} \phi_k z_i^k \big) \\
& \quad \big(\sum_{k<0}e^{ f(-k)} \phi_k z_j^k
+\phi_0 + \sum_{k>0}e^{- f(k)} \phi_k z_j^k \big)
\big\rangle\\
=& \half +
\sum_{k=1}^\infty (-1)^k
 z_i^{-k}z_j^k .
\end{split}
\end{equation*}
Then the conclusion is clear.
\end{proof}

\section{Computation of Disconnected Bosonic $(n,m)$-Point Functions}
\label{sec-disconn}

In this section,
we use the results of the previous section and boson-fermion correspondence of type $B$
to compute the following (disconnected) bosonic $(n,m)$-point functions
associated to $\tau_f^B$:
\be
\label{eq-de-ferm-nmpt}
\langle H(z_1)H(z_2)\cdots H(z_{n})
e^\hf
H(z_{n+1})H(z_{n+2})\cdots H(z_{n+m}) \rangle.
\ee

Let $f:\bZ_{>0} \to \bC$ be a function,
and denote by $A_f$ the following formal series:
\be
\label{eq-def-Af}
A_f (w,z) = -\frac{1}{4} -\half \sum_{k=1}^\infty
(-1)^k e^{ f(k)}w^{-k}z^k.
\ee
In particular,
one has:
\be
A_0 (w,z) = -\half \cdot i_{w,z}\frac{w-z}{2(w+z)}.
\ee
First we prove the following:
\begin{Proposition}
We have:
\be
\label{eq-normalferm-2n2m}
\begin{split}
&\langle :\phi(z_1)\phi(z_2): :\phi(z_3)\phi(z_4): \cdots :\phi(z_{2n-1})\phi(z_{2n}):
e^\hf \\
& :\phi(z_{2n+1})\phi(z_{2n+2}) : \cdots :\phi(z_{2n+2m-1}) \phi(z_{2n+2m}): \rangle
= \Pf (B_{i,j}),
\end{split}
\ee
where $(B_{i,j})$ is an  anti-symmetric matrix of size $(2n+2m)\times (2n+2m)$,
whose upper-triangular part is given by:
\be
B_{i,j} = \begin{cases}
-2A_f(z_i,z_j),
& \text{ if $i\leq 2n<j$;}\\
0, & \text{ if $i=2s-1,j=2s$ for some $s$;}\\
-2 A_0(z_i,z_j),
& \text{ other cases where $i<j$.}
\end{cases}
\ee
\end{Proposition}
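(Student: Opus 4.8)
The plan is to reduce the statement to the already-established fermionic $(2n,2m)$-point function formula \eqref{eq-thm-ferm2n2m} together with the elementary combinatorics relating a vacuum expectation value of normal-ordered bilinears to a Pfaffian with prescribed vanishing entries. First I would rewrite the left-hand side of \eqref{eq-normalferm-2n2m} in the spirit of \eqref{eq-de-ferm-nmpt-2}: using $e^{\hf}|0\rangle=|0\rangle$, $\langle 0|e^{-\hf}=\langle 0|$, and inserting $e^{\hf}e^{-\hf}=1$ between consecutive factors, one conjugates $e^{\hf}$ past the first $n$ normal-ordered pairs. Since $:\phi(z_{2s-1})\phi(z_{2s}):$ differs from $\phi(z_{2s-1})\phi(z_{2s})$ only by the scalar $\langle\phi(z_{2s-1})\phi(z_{2s})\rangle$, conjugation turns it into $\phi_f(z_{2s-1})\phi_f(z_{2s})-\langle\phi(z_{2s-1})\phi(z_{2s})\rangle$ with $\phi_f$ as in Lemma~\ref{lem-conj-hf-phi}. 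Here I would record the key identity, already contained in the proof of \eqref{eq-thm-ferm2n2m}, that $\langle\phi_f(z_i)\phi_f(z_j)\rangle=\langle\phi(z_i)\phi(z_j)\rangle$ for $1\le i<j\le 2n$, because the factors $e^{\pm f(k)}$ cancel in every nonzero contraction. Thus the subtracted scalar is exactly the internal contraction $\langle\phi_f(z_{2s-1})\phi_f(z_{2s})\rangle$, and the left-hand side of \eqref{eq-normalferm-2n2m} becomes the vacuum expectation value of a product of $n+m$ ``normal-ordered'' bilinears $\psi_{2s-1}\psi_{2s}-\langle\psi_{2s-1}\psi_{2s}\rangle$, where $\psi_i=\phi_f(z_i)$ for $i\le 2n$ and $\psi_i=\phi(z_i)$ for $i>2n$, each bilinear built from two \emph{consecutive} fields.

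Next I would invoke the standard form of Wick's theorem for products of normal-ordered quadratics: subtracting $\langle\psi_{2s-1}\psi_{2s}\rangle$ removes precisely the internal contraction of the $s$-th pair, and an inclusion–exclusion over the set of pairs declared internally contracted collapses the Wick expansion to $\sum_\pi \mathrm{sgn}(\pi)\prod_{\{i,j\}\in\pi}\langle\psi_i\psi_j\rangle$, the sum running over perfect matchings $\pi$ of $\{1,\dots,2n+2m\}$ containing no edge $\{2s-1,2s\}$. Because the two members of each pair are adjacent in the product, pulling out an internal contraction costs no sign and does not disturb the relative order of the remaining fields, so this is precisely the Pfaffian expansion of the antisymmetric matrix obtained from $(\wB_{i,j})$ of \eqref{eq-thm-ferm2n2m-defB} by setting each entry $\wB_{2s-1,2s}$ to $0$ and keeping all other entries. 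By \eqref{eq-thm-ferm2n2m-defB} and the definition \eqref{eq-def-Af} of $A_f$ and $A_0$, the surviving entries are $\wB_{i,j}=-2A_f(z_i,z_j)$ for $i\le 2n<j$ and $\wB_{i,j}=i_{z_i,z_j}\frac{z_i-z_j}{2(z_i+z_j)}=-2A_0(z_i,z_j)$ when $i<j\le 2n$ or $2n+1\le i<j$; hence the matrix is exactly $(B_{i,j})$ and \eqref{eq-normalferm-2n2m} follows.

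The step requiring the most care is the sign bookkeeping in the inclusion–exclusion: one must check that contracting an adjacent pair $(\psi_{2s-1},\psi_{2s})$ contributes sign $+1$ and leaves the induced order on the remaining fields unchanged, so that it agrees with the Pfaffian sign convention inherited from Wick's theorem in \eqref{eq-thm-ferm2n2m}. This is routine precisely because the paired fields sit next to each other; with it in hand, everything else is a direct substitution into the already-established formula \eqref{eq-thm-ferm2n2m}. As an alternative avoiding the general combinatorial lemma, one could argue by induction on $n+m$, expanding one normal-ordered pair $\phi(z_{2s-1})\phi(z_{2s})=:\phi(z_{2s-1})\phi(z_{2s}):+\langle\phi(z_{2s-1})\phi(z_{2s})\rangle$ at a time and applying \eqref{eq-thm-ferm2n2m} to the resulting correlators with fewer insertions, at the cost of somewhat longer bookkeeping.
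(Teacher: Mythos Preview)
Your proposal is correct and follows essentially the same route as the paper: both arguments expand the normal-ordered product using $:\phi(z_{2s-1})\phi(z_{2s}):=\phi(z_{2s-1})\phi(z_{2s})-\langle\phi(z_{2s-1})\phi(z_{2s})\rangle$, apply Wick's theorem, and use inclusion--exclusion over the set of pairs that are contracted internally to see that the result is the Pfaffian of $(\wB_{i,j})$ with the entries $\wB_{2s-1,2s}$ set to zero. The paper carries this out by expanding both the correlator and $\Pf(B_{i,j})$ (writing $B_{2s-1,2s}=\wB_{2s-1,2s}+f_s$ with $f_s=-\wB_{2s-1,2s}$) into matching sums indexed by subsets $L_1\subset[n]$, $L_2\subset[m]$; your version is a bit more streamlined because you first invoke the identity $\langle\phi_f(z_i)\phi_f(z_j)\rangle=\langle\phi(z_i)\phi(z_j)\rangle$ to recognize the subtracted scalar as exactly the internal contraction $\wB_{2s-1,2s}$, and then phrase the cancellation directly as ``Wick with no internal edges $\{2s-1,2s\}$''.
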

\begin{proof}
By \eqref{eq-normal-fermfield} we know that
the left-hand side of \eqref{eq-normalferm-2n2m} equals to:
\be
\label{eq-normalferm-2n2m-2}
\begin{split}
&\big\langle \big( \phi(z_1)\phi(z_2) - \frac{z_1-z_2}{2(z_1+z_2)}\big)
 \cdots
\big( \phi(z_{2n-1})\phi(z_{2n}) -  \frac{z_{2n-1}-z_{2n}}{2(z_{2n-1}+z_{2n})} \big)
 \\
& e^\hf
\big( \phi(z_{2n+1})\phi(z_{2n+2}) -\frac{z_{2n+1}-z_{2n+2}}{2(z_{2n+1}+z_{2n+2})} \big) \cdots\\
& \big(\phi(z_{2n+2m-1}) \phi(z_{2n+2m}) - \frac{z_{2n+2m-1}-z_{2n+2m}}{2(z_{2n+2m-1}+z_{2n+2m})}
\big) \big\rangle\\
=&
\big\langle \big( \phi_f(z_1)\phi_f(z_2) - \frac{z_1-z_2}{2(z_1+z_2)}\big)
 \cdots
\big( \phi_f(z_{2n-1})\phi_f(z_{2n}) -  \frac{z_{2n-1}-z_{2n}}{2(z_{2n-1}+z_{2n})} \big)
 \\
& \big( \phi(z_{2n+1})\phi(z_{2n+2}) -\frac{z_{2n+1}-z_{2n+2}}{2(z_{2n+1}+z_{2n+2})} \big) \cdots\\
& \big(\phi(z_{2n+2m-1}) \phi(z_{2n+2m}) - \frac{z_{2n+2m-1}-z_{2n+2m}}{2(z_{2n+2m-1}+z_{2n+2m})}
\big) \big\rangle \\
=& \sum_{K_1\sqcup L_1 =[n] , K_2\sqcup L_2 =[m]}
\Big( \prod_{l \in L_1}f_l \Big) \Big( \prod_{l \in L_2}f_l' \Big)
\langle \phi_{K_1} \phi_{K_2}' \rangle,
\end{split}
\ee
where $[n]$ denotes the set $\{1,2,\cdots,n\}$, and
\begin{equation*}
f_l = - \frac{z_{2l-1}-z_{2l}}{2(z_{2l-1}+z_{2l})},
\qquad
f_l' = - \frac{z_{2n+2l-1}-z_{2n+2l}}{2(z_{2n+2l-1}+z_{2n+2l})},
\end{equation*}
and for a set $K=\{k_1,k_2,\cdots,k_s\}$ with $k_1<k_2<\cdots <k_s$,
\begin{equation*}
\begin{split}
&\phi_K= \phi_f(z_{2k_1-1})\phi_f(z_{2k_1}) \phi_f(z_{2k_2-1})\phi_f(z_{2k_2})
\cdots \phi_f(z_{2k_s-1})\phi_f(z_{2k_s}),\\
&\phi_K'= \phi(z_{2n+2k_1-1})\phi(z_{2n+2k_1})
\cdots \phi(z_{2n+2k_s-1})\phi(z_{2n+2k_s}).
\end{split}
\end{equation*}

On the other hand, notice that:
\begin{equation*}
B_{i,j} = \begin{cases}
\wB_{i,j} + f_s,
& \text{ if $i=2s-1,j=2s$ for some $1\leq s\leq n$;}\\
\wB_{i,j} + f_s',
& \text{ if $i=2s-1,j=2s$ for some $n+1\leq s \leq n+m$;}\\
\wB_{i,j}, & \text{ otherwise.}
\end{cases}
\end{equation*}
Therefore the right-hand side of \eqref{eq-normalferm-2n2m} equals to:
\begin{equation*}
\begin{split}
\Pf (B_{i,j}) =&
\sum_{\substack{(p_1,q_1,\cdots,p_{n+m},q_{n+m})\\p_k<q_k, \quad p_1<\cdots<p_{n+m}}}
\text{sgn}(p,q)\cdot \prod_{k=1}^{n+m}
B_{p_k,q_k}\\
=&
\sum_{(p,q)} \text{sgn}(p,q)
\prod_{k\in A_{(p,q)}} (\wB_{p_k,q_k}+f_{\frac{q_k}{2}})
\prod_{k\in B_{(p,q)}} (\wB_{p_k,q_k}+f_{\frac{q_k}{2}}')
\prod_{k\in C_{(p,q)}} \wB_{p_k,q_k},
\end{split}
\end{equation*}
where the summation is over all possible  permutations
$(p_1,q_1,\cdots,p_{n+m},q_{n+m})$ of $(1,2,\cdots,2n+2m)$,
and $\text{sgn}(p,q) = \pm 1$ denotes the sign of this permutation,
and $A_{(p,q)},B_{(p,q)},C_{(p,q)}$ are the following partition of the set $\{1,2, \cdots,n+m\}$:
\begin{equation*}
\begin{split}
&A_{(p,q)} = \{k \big| \text{$p_k = 2s-1$, $q_k=2s$, for some $1\leq s \leq n$ }\}; \\
&B_{(p,q)} = \{k \big| \text{$p_k = 2s-1$, $q_k=2s$, for some $n+1\leq s \leq n+m$ }\}; \\
&C_{(p,q)} = \{1,2,\cdots,n+m\} \backslash (A_{(p,q)} \cup B_{(p,q)}).
\end{split}
\end{equation*}
Then we expand the product in the right-hand side and obtain:
\be
\label{eq-pf-new}
\begin{split}
\Pf (B_{i,j}) =&
\sum_{L_1\subset [n], L_2\subset [m]}
\prod_{l\in L_1} f_{l} \prod_{l\in L_2} f'_{l}
\sum_{(p',q')} \text{sgn}(p',q') \prod_k \wB_{p_k',q_k'}.
\end{split}
\ee
Here $(p',q')=(p_1',q_1',p_2',q_2',\cdots)$ in the right-hand side runs over all permutations of
$\{1,2,\cdots,2n+2m\} \backslash
\big( \widetilde L_1 \cup  \widetilde L_2 \big)$
such that $p_k'<q_k'$ and $p_1'<p_2'<\cdots$,
where
\begin{equation*}
\begin{split}
&\widetilde L_1 = \{ 2l-1 \big| l\in L_1 \} \cup \{ 2l \big| l\in L_1\},\\
&\widetilde L_2 = \{ 2l+2n-1 \big| l\in L_2 \} \cup \{ 2l+2n \big| l\in L_2\}.
\end{split}
\end{equation*}
Here we have also used the fact that deleting two adjacent indices $(2s-1,2s)$
in a permutation preserves the sign.
And again by Wick's Theorem we see that the right-hand side of \eqref{eq-pf-new}
equals to \eqref{eq-normalferm-2n2m-2}.
This completes the proof.
\end{proof}

Recall that the generating series $H(z)$ of bosons are given by
(see \S \ref{sec-pre}):
\begin{equation*}
H(z) = -\half :\phi(-z) \phi(z):,
\end{equation*}
thus by taking $z_{2s-1}\to -z_{2s}$ for every $1\leq s \leq n+m$
in the above proposition,
we obtain our main result in this section:
\begin{Theorem}
\label{thm-disconn-b-nmpt}
The bosonic $(n,m)$-point functions are given by:
\be
\langle H(z_1)\cdots H(z_{n})
e^\hf
H(z_{n+1})\cdots H(z_{n+m}) \rangle
=\Pf (C_{i,j}),
\ee
where $C_{i,j}$ is an  anti-symmetric matrix of size $(2n+2m)\times (2n+2m)$,
whose upper-triangular part is given by:
\be
C_{i,j} = \begin{cases}
A_f \big(
(-1)^i z_{\lceil \frac{i}{2}\rceil }, (-1)^j z_{\lceil \frac{j}{2} \rceil}
\big),
& \text{ if $i\leq 2n<j$;}\\
0, & \text{ if $i=2s-1,j=2s$ for some $s$;}\\
A_0 \big(
(-1)^i z_{\lceil \frac{i}{2}\rceil }, (-1)^j z_{\lceil \frac{j}{2} \rceil}
\big),
& \text{ other cases where $i<j$.}
\end{cases}
\ee
\end{Theorem}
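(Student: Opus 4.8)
The plan is to obtain Theorem \ref{thm-disconn-b-nmpt} directly from the preceding Proposition (formula \eqref{eq-normalferm-2n2m}) by the substitution indicated just above the statement: $z_{2s-1}\mapsto -z_s$ and $z_{2s}\mapsto z_s$ for $1\le s\le n+m$, together with the identity $H(z)=-\frac{1}{2}:\phi(-z)\phi(z):$ recalled in \S\ref{sec-pre}. Equivalently, this substitution is $z_i\mapsto (-1)^i z_{\lceil i/2\rceil}$, which is exactly the rule appearing in the definition of $C_{i,j}$.

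First I would compute the effect of the substitution on the two sides of \eqref{eq-normalferm-2n2m}. On the left-hand side, the $s$-th normal-ordered pair becomes $:\phi(-z_s)\phi(z_s): = -2H(z_s)$, so the left-hand side turns into $(-2)^{n+m}\langle H(z_1)\cdots H(z_n)e^{\hf}H(z_{n+1})\cdots H(z_{n+m})\rangle$. On the right-hand side, $\Pf(B_{i,j})$ becomes the Pfaffian of the matrix with each $z_i$ replaced by $(-1)^i z_{\lceil i/2\rceil}$. Using the homogeneity $\Pf(\lambda M)=\lambda^{n+m}\Pf(M)$ for a $(2n+2m)\times(2n+2m)$ antisymmetric matrix, the scalar can be absorbed into the Pfaffian, giving
\[
\langle H(z_1)\cdots H(z_n)e^{\hf}H(z_{n+1})\cdots H(z_{n+m})\rangle
=\Pf\Big(-\tfrac{1}{2}\,B_{i,j}\big|_{\,z_i=(-1)^i z_{\lceil i/2\rceil}}\Big).
\]
It then remains to identify the matrix inside this Pfaffian with $C_{i,j}$ entrywise: for $i\le 2n<j$ one has $-\frac{1}{2}\cdot(-2A_f)=A_f\big((-1)^iz_{\lceil i/2\rceil},(-1)^jz_{\lceil j/2\rceil}\big)$ by homogeneity of $A_f$ in its arguments; the entries with $i=2s-1,\ j=2s$ remain $0$; and in every remaining case one similarly gets $A_0\big((-1)^iz_{\lceil i/2\rceil},(-1)^jz_{\lceil j/2\rceil}\big)$. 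This is precisely the matrix $C_{i,j}$ of the statement.

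The one point that needs care — and the main, though mild, obstacle — is that the formal-expansion prescriptions built into $A_f$ and $A_0$ (the $i_{w,z}$-expansions valid on $\{|w|>|z|\}$) must remain consistent after the substitution, even though within a single normal-ordered pair the two variables then have equal modulus. This causes no problem: the entry $B_{2s-1,2s}$ of a single pair vanishes, so no expansion is invoked there, while for two distinct pairs $s<t$ every relevant entry $B_{p,q}$ with $p\in\{2s-1,2s\}$ and $q\in\{2t-1,2t\}$ requires only $|z_s|>|z_t|$, which is exactly the radial ordering $|z_1|>|z_2|>\cdots>|z_{n+m}|$ under which the bosonic $(n,m)$-point function is defined. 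Hence all geometric series occur in their correct convergence regime and the two Pfaffians agree term by term, which completes the argument.
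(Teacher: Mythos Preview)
Your proposal is correct and follows exactly the paper's approach: the paper proves Theorem \ref{thm-disconn-b-nmpt} in a single sentence, by applying the substitution $z_{2s-1}\to -z_{2s}$ (and then relabelling $z_{2s}$ as $z_s$) to the preceding Proposition together with $H(z)=-\tfrac{1}{2}:\phi(-z)\phi(z):$. You have simply spelled out the scalar bookkeeping $(-2)^{n+m}\cdot(-\tfrac{1}{2})^{n+m}=1$ via the Pfaffian homogeneity and addressed the expansion-domain consistency, neither of which the paper makes explicit; one cosmetic point is that the phrase ``by homogeneity of $A_f$ in its arguments'' is unnecessary and slightly misleading, since the identification $-\tfrac{1}{2}\cdot(-2A_f)=A_f$ is plain arithmetic rather than any scaling property of $A_f$.
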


\section{Connected Bosonic $(n,m)$-Point Functions and Free Energy}
\label{sec-conn-boson}

In this section,
we first recall the notion of bosonic $(n,m)$-point functions
associated to the tau-function $\tau_f^B$,
and then discuss the relation between them and the free energy $\log \tau_f^B$.
The computations of these connected bosonic $(n,m)$-point functions
will be carried out in next section.

Let $f:\bZ_{>0} \to \bC$ be an arbitrary function,
and let $\tau_f^B$ be the diagonal tau-function given by \eqref{eq-def-tauf}.
In what follows we will denote by
\be
\langle H(z_{[n+m]}) \rangle_{f;n,m}
= \langle H(z_1)\cdots H(z_{n})
e^\hf
H(z_{n+1})\cdots H(z_{n+m}) \rangle
\ee
the (disconnected) bosonic $(n,m)$-point function associated to $\tau_f^B$,
where $[n+m]$ denotes the set $\{1,2,\cdots,n+m\}$,
and we use the notation
\begin{equation*}
H(z_I) = H(z_{i_1})H(z_{i_2})\cdots H(z_{i_k})
\end{equation*}
for $I=\{i_1,i_2,\cdots,i_k\}$ with $i_1< i_2< \cdots<i_k$.
The connected $(n,m)$-point function
$\langle H(z_1) \cdots H(z_n) H(z_{n+1}) \cdots H(z_{n+m}) \rangle_{f;n,m}^c$
is defined by the following M\"obius inversion formulas:
\begin{equation*}
\begin{split}
&\langle H(z_{[n+m]})  \rangle_{f;m,n}
= \sum_{I_1\sqcup \cdots \sqcup I_k =[n+m]} \frac{1}{k!}
\langle H(z_{I_1}) \rangle_{f;n_1,m_1}^c \cdots \langle H(z_{I_k}) \rangle_{f;n_k,m_k}^c,\\
&\langle H(z_{[n+m]})  \rangle_{f;m,n}^c
=\sum_{I_1\sqcup \cdots \sqcup I_k =[n+m]} \frac{(-1)^{k-1}}{k}
\langle H(z_{I_1}) \rangle_{f;n_1,m_1} \cdots \langle H(z_{I_k}) \rangle_{f;n_k,m_k},
\end{split}
\end{equation*}
where $I_1,\cdots,I_k \subset [n+m]$ are nonempty subsets,
and we denote:
\be
\label{eq-notation-njmj}
n_j = \big|I_j\cap [n]\big|,
\qquad
m_j = \big|I_j \backslash [n]\big|,
\qquad
1\leq j\leq k.
\ee

Now denote by
\be
F_f^B(\bm t^+,\bm t^-) = \log \tau_f^B (\bm t^+,\bm t^-)
\ee
the free energy associated to $\tau_f^B$.
In the rest of this section,
we show that:
\begin{Proposition}
\label{prop-relation-fe}
For every pair $(n,m)$ with $n+m\geq 1$,
we have:
\be
\label{eq-reln-npt&fe}
\begin{split}
&\langle H(z_{[n+m]}) \rangle_{f;n,m}^c
= (\delta_{n,2}\delta_{m,0} + \delta_{n,0}\delta_{m,2} )\cdot
i_{z_1,z_2} \frac{z_1z_2 (z_1^2+z_2^2)}{2(z_1^2-z_2^2)^2}\\
&+ \sum_{\substack{j_1,\cdots,j_n>0:\text{ odd}\\k_1,\cdots,k_m >0:\text{ odd} }}
\frac{\pd^{n+m} F_f^B(\bm t^+,\bm t^-)}
{\pd t_{j_1}^+\cdots\pd t_{j_n}^+ \pd t_{k_1}^-\cdots\pd t_{k_m}^-}
\Big|_{\bm t=0} \cdot
\prod_{a=1}^n z_a^{-j_a} \prod_{b=1}^m z_{n+b}^{k_b},
\end{split}
\ee
where $\bm t=(\bm t^+,\bm t^-)$,
and
\be
i_{z_1,z_2} \frac{z_1z_2 (z_1^2+z_2^2)}{2(z_1^2-z_2^2)^2}
= \sum_{i>0: \text{ odd}} \frac{i}{2}z_1^{-i} z_2^i.
\ee
\end{Proposition}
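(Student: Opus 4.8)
The plan is to translate the bosonic correlators into $t$-derivatives of $\tau_f^B$ using the boson-fermion correspondence of type $B$, and then compare the combinatorial (connected vs.\ disconnected) structure on both sides. First I would use \eqref{eq-bfcor-boson}: for odd $n>0$, the operator $H_n$ acts on the Fock space image as $\pd/\pd t_n$, and $H_{-n}$ acts as multiplication by $\frac{n}{2}t_n$. Applying this to the definition \eqref{eq-def-tauf} of $\tau_f^B$, each $H(z_a)$ with $a\le n$ sitting to the left of $e^{\hf}$ contributes $\sum_{j>0\text{ odd}} z_a^{-j}\,\pd/\pd t_j^+$ acting on $\Gamma_+^B(\bm t^+)$, and each $H(z_{n+b})$ to the right of $e^{\hf}$ contributes $\sum_{k>0\text{ odd}} z_{n+b}^{k}\cdot\frac{k}{2}t_k^-$ when acting on $\Gamma_-^B(\bm t^-)$. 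However, one must be careful: the $H(z)$'s on the left also contain the annihilation parts $H_{-n}$ (which act trivially to produce $0$ only after hitting the vacuum on the far left), so I would instead move all positive-mode pieces to the right and negative-mode pieces to the left, picking up commutators $[H_n,H_{-n}]=\frac n2\delta$. This is exactly where the inhomogeneous term appears.

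Concretely, I would argue as follows. Write $H(z_a)=H_+(z_a)+H_-(z_a)$ splitting into positive and negative odd modes (the zero mode is absent since there is no $H_0$). For the disconnected $(n,m)$-point function, normal-order everything: moving each $H_-(z_a)$ ($a\le n$) to the left past the other $H(z_{a'})$'s produces contraction terms $\langle H_-(z_a)H_+(z_{a'})\rangle = \sum_{i>0\text{ odd}}\frac i2 z_a^{-i}z_{a'}^{i}$ for $a<a'$ (and similarly on the right), which is precisely $i_{z_1,z_2}\frac{z_1z_2(z_1^2+z_2^2)}{2(z_1^2-z_2^2)^2}$ when $n=2,m=0$ or $n=0,m=2$; all remaining contractions between left and right $H$'s vanish because $H_+(z_a)$ ($a\le n$) commutes through $e^{\hf}$ up to the $f$-twist and then annihilates $|0\rangle$ unless paired. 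After full normal ordering, what survives is $\sum_{j_1,\dots,j_n,k_1,\dots,k_m>0\text{ odd}}\big(\prod z_a^{-j_a}\prod z_{n+b}^{k_b}\big)$ times $\langle \prod \pd_{t_{j_a}^+}\Gamma_+^B \cdot e^{\hf}\cdot \prod(\tfrac{k_b}{2}t_{k_b}^-)\Gamma_-^B\rangle\big|$, and by \eqref{eq-bfcor-boson} again (now reading $H_{-k}$ on the right as $\frac k2 t_k^-$ multiplication) this is exactly $\pd^{n+m}\tau_f^B/(\pd t_{j_1}^+\cdots\pd t_{k_m}^-)|_{\bm t=0}$. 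So the disconnected bosonic $(n,m)$-point function equals the pure-contraction term (present only for $(n,m)\in\{(2,0),(0,2)\}$) plus the generating series of $\bm t$-derivatives of $\tau_f^B$ at $\bm t=0$.

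The final step is to pass from $\tau_f^B$ to $F_f^B=\log\tau_f^B$, i.e.\ from the disconnected to the connected correlators. Here I would observe that the M\"obius-type inversion formulas defining $\langle\cdot\rangle_{f;n,m}^c$ have exactly the same shape as the relation between $\pd^{n+m}\tau_f^B$-sums and $\pd^{n+m}\log\tau_f^B$-sums (a cumulant expansion), because $\tau_f^B(\bm 0)=\langle 0|0\rangle=1$. More precisely, setting $G_{n,m}:=\sum (\prod z^{-j}\prod z^{k})\,\pd^{n+m}\tau_f^B|_{\bm 0}$ and $G_{n,m}^c$ the analogous sum for $\log\tau_f^B$, the exponential formula gives $1+\sum G = \exp(\sum G^c)$ in the bookkeeping variables, which is the same partition-sum identity as the first displayed M\"obius formula. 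Since the disconnected correlator equals $G_{n,m}$ plus the isolated two-point correction, and the two-point correction is itself ``connected'' (it does not factor), matching the two inversion systems termwise yields \eqref{eq-reln-npt&fe}. The main obstacle I anticipate is the careful bookkeeping in the normal-ordering step — tracking which left-right contractions genuinely vanish versus contribute, and verifying that the only surviving non-derivative term is the two-point one with the stated closed form; the algebra with $i_{z_1,z_2}$-expansions and the identity $i_{z_1,z_2}\frac{z_1z_2(z_1^2+z_2^2)}{2(z_1^2-z_2^2)^2}=\sum_{i>0\text{ odd}}\frac i2 z_1^{-i}z_2^i$ is routine but must be done with the correct expansion region.
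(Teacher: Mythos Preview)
Your overall strategy---normal order the bosons, identify the fully normal-ordered piece with $\pd^{n+m}\tau_f^B|_{\bm t=0}$, then pass to $\log\tau_f^B$ by M\"obius inversion---is reasonable and is in fact a different route from the paper's. The paper instead keeps $\bm t$ general, introduces
\[
G_{f;n,m}(z;\bm t)=\frac{1}{\tau_f^B}\langle\Gamma_+^B(\bm t^+)H(z_1)\cdots H(z_n)e^{\hf}H(z_{n+1})\cdots H(z_{n+m})\Gamma_-^B(\bm t^-)\rangle,
\]
appeals to \cite[Prop.~5.1]{zhou1} for the identity $G_{f;n,m}^c=\sum\pd^{n+m}F_f^B\cdot\prod z^{-j}\prod z^{k}$ when $n+m\ge3$, and checks $n+m\le2$ by hand. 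By keeping $\bm t$ alive, the contraction terms arise automatically from the operator $\sum_k\big(z^{-k}\pd_{t_k}+\tfrac{k}{2}t_k z^k\big)$ and no separate Wick bookkeeping is needed.

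However, as written your argument has a genuine gap. The assertion that ``the disconnected bosonic $(n,m)$-point function equals the pure-contraction term (present only for $(n,m)\in\{(2,0),(0,2)\}$) plus the generating series of $\bm t$-derivatives of $\tau_f^B$ at $\bm t=0$'' is false. Wick normal ordering on the left block produces \emph{all partial matchings} among $\{1,\dots,n\}$ (and likewise on the right), so the correct identity is
\[
\langle H(z_{[n+m]})\rangle_{f;n,m}
=\sum_{\substack{M\ \text{matching in }[n]\\ M'\ \text{matching in }[n+1,n+m]}}
\Big(\prod_{(i,j)\in M\cup M'}C(z_i,z_j)\Big)\cdot T_{I_L,I_R},
\]
where $C(z_i,z_j)=\sum_{l>0\text{ odd}}\tfrac{l}{2}z_i^{-l}z_j^{l}$ and $T_{I_L,I_R}=\langle\prod_{a\in I_L}H_+(z_a)\,e^{\hf}\prod_{b\in I_R}H_-(z_b)\rangle$ is the $\tau_f^B$-derivative term on the unmatched indices. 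For instance, for $(n,m)=(4,0)$ the derivative term vanishes (four annihilators hitting $|0\rangle$), yet the disconnected correlator is the nonzero sum $\sum_{\text{perfect matchings}}C\cdot C$; this contradicts your claimed additive decomposition.

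What rescues the approach is that the relation above is \emph{multiplicative} in the species/exponential-formula sense: the matchings contribute $\exp(C)$ and the $T$-terms contribute $\exp(T^c)$, so $\exp(D^c)=\exp(C)\cdot\exp(T^c)$ and hence $D^c_{n,m}=T^c_{n,m}+\big(\delta_{n,2}\delta_{m,0}+\delta_{n,0}\delta_{m,2}\big)C$. You need to make this multiplicative structure explicit; your current ``additive correction $+$ M\"obius'' step does not go through because the premise it rests on is incorrect.
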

\begin{proof}
Define the functions $G_{f;n,m}$ to be:
\begin{equation*}
\begin{split}
&G_{f;n,m}(z_1,\cdots,z_{n+m}) \\
=& \frac{1}{\tau_f^B (\bm t^+,\bm t^-)}
\langle \Gamma_+^B(\bm t^+)
H(z_1)\cdots H(z_n) e^\hf H(z_{n+1})\cdots H(z_{n+m})
\Gamma_-^B(\bm t^-) \rangle,
\end{split}
\end{equation*}
and define the connected version $G_{f;n,m}^c$ by the M\"obius inversion:
\begin{equation*}
G_{f;n,m}^c (z_1,\cdots,z_{n+m})
=\sum_{I_1\sqcup \cdots \sqcup I_k =[n+m]} \frac{(-1)^{k-1}}{k}
G_{f;n_1,m_1}(z_{I_1})\cdots G_{f;n_k,m_k}(z_{I_k}),
\end{equation*}
where we use the notation \eqref{eq-notation-njmj}.
Then we have:
\begin{equation*}
\begin{split}
&\langle H(z_{[n+m]}) \rangle_{f;m,n} =
G_{f;n,m}(z_1,\cdots,z_{n+m})|_{\bm t =0},\\
&\langle H(z_{[n+m]}) \rangle_{f;m,n}^c =
G_{f;n,m}^c(z_1,\cdots,z_{n+m})|_{\bm t =0}.
\end{split}
\end{equation*}

For every pair $(n,m)$ with $n+m\geq 3$,
we have:
\begin{equation*}
\begin{split}
G_{f;n,m}^c(z_1,\cdots,z_{n+m})=
\sum_{\substack{j_1,\cdots,j_n>0:\text{ odd}\\k_1,\cdots,k_m >0:\text{ odd} }}
\frac{\pd^{n+m} F_f^B(\bm t^+,\bm t^-)}
{\pd t_{j_1}^+\cdots\pd t_{j_n}^+ \pd t_{k_1}^-\cdots\pd t_{k_m}^-}
\prod_{a=1}^n z_a^{-j_a} \prod_{b=1}^m z_{n+b}^{k_b}.
\end{split}
\end{equation*}
This is actually \cite[Prop. 5.1]{zhou1},
see also \cite[Prop. 4.1]{wy2},
and here we will not repeat the proof.
Then the relation \eqref{eq-reln-npt&fe} for $n+m\geq 3$
follows by taking $\bm t=0$.

In what follows,
we check the relation \eqref{eq-reln-npt&fe} for
$(n,m)$ with $n+m\leq 2$ directly.
First consider the case $(n,m)=(1,0)$.
By the boson-fermion correspondence \eqref{eq-bfcor-boson} we know that
for every odd integer $k>0$,
\begin{equation*}
\begin{split}
&\langle \Gamma_+(\bm t^+) H_k e^\hf \Gamma_-(\bm t^-) \rangle
= \frac{\pd}{\pd t_k^+} \tau_f^B,\qquad
\langle \Gamma_+(\bm t^+) H_{-k} e^\hf \Gamma_-(\bm t^-) \rangle
= \frac{n}{2}t_k^+ \tau_f^B;\\
&\langle \Gamma_+(\bm t^+) e^\hf H_k \Gamma_-(\bm t^-) \rangle
= \frac{n}{2}t_k^- \tau_f^B,\qquad
\langle \Gamma_+(\bm t^+) e^\hf H_{-k} \Gamma_-(\bm t^-) \rangle
= \frac{\pd}{\pd t_k^-} \tau_f^B.
\end{split}
\end{equation*}
Thus we have:
\begin{equation*}
\begin{split}
G_{f;1,0}(z)
=& \frac{1}{\tau_f^B} \cdot
\sum_{k>0:\text{ odd}} \big(
z^{-k}\frac{\pd}{\pd t_k^+} + \frac{k}{2} t_k^+\cdot z^k
\big) \tau_f^B \\
= &
\sum_{k>0:\text{ odd}} \big(
\frac{\pd F_f^B }{\pd t_k^+} \cdot z^{-k}
+ \frac{k}{2} t_k^+ \cdot z^k
\big),
\end{split}
\end{equation*}
and by M\"obius inversion formula we know that:
\begin{equation*}
G_{f;1,0}^c(z) = G_{f;1,0}(z)
=
\sum_{k>0:\text{ odd}} \big(
\frac{\pd F_f^B }{\pd t_k^+} \cdot z^{-k}
+ \frac{k}{2} t_k^+ \cdot z^k
\big),
\end{equation*}
and the case $(n,m)=(1,0)$ is proved by taking $\bm t=0$.
Similarly,
we have:
\begin{equation*}
G_{f;0,1}^c(z) =
G_{f;0,1}(z)
=
\sum_{k>0:\text{ odd}} \big(
\frac{k}{2} t_k^- \cdot z^{-k}
+ \frac{\pd F_f^B }{\pd t_k^-}  \cdot z^k
\big),
\end{equation*}
which proves the case $(n,m)=(0,1)$.

Now consider the case $(n,m)=(2,0)$.
We have:
\begin{equation*}
\begin{split}
G_{f;2,0} (z_1,z_2)
=& \frac{1}{\tau_f^B}
\sum_{j,k>0:\text{ odd}}
\big(
z_1^{-j}\frac{\pd}{\pd t_j^+} + \frac{j}{2} t_j^+\cdot z_1^j
\big) \big(
z_2^{-k}\frac{\pd}{\pd t_k^+} + \frac{k}{2} t_k^+\cdot z_2^k
\big) \tau_f^B \\
= & \sum_{j,k>0:\text{ odd}} \Big(
\big( \frac{\pd^2 F_f^B}{\pd t_j^+ \pd t_k^+} +
\frac{\pd F_f^B}{\pd t_j^+} \frac{\pd F_f^B}{\pd t_k^+}
\big) z_1^{-j} z_2^{-k}
+ \frac{j}{2}t_j^+ \frac{\pd F_f^B}{\pd t_k^+} z_1^jz_2^{-k}\\
& + \frac{k}{2}t_k^+ \frac{\pd F_f^B}{\pd t_j^+} z_1^{-j}z_2^{k}
+ \delta_{j,k} \cdot \frac{k}{2}z_1^{-j} z_2^k
+ \frac{jk}{4} t_j^+ t_k^+ z_1^jz_2^k
\Big),
\end{split}
\end{equation*}
and then by the M\"obius inversion formula,
\begin{equation*}
\begin{split}
G_{f;2,0}^c (z_1,z_2)
=& G_{f;2,0} (z_1,z_2) - G_{f;1,0} (z_1) G_{f;1,0} (z_2) \\
=& \sum_{j,k>0:\text{ odd}} \frac{\pd^2 F_f^B}{\pd t_j^+ \pd t_k^+}
+ \sum_{j>0: \text{ odd}} \frac{j}{2}z_1^{-j} z_2^j.
\end{split}
\end{equation*}
Then by taking $\bm t=0$ we have proved the case $(2,0)$.
The computations for $(n,m)= (0,2)$ and $(1,1)$ are similar
and here we omit the details.
In the case $(n,m)=(1,1)$ there is no additional term
$\frac{z_1z_2(z_1^2+z_2^2)}{2(z_1^2-z_2^2)^2}$.
\end{proof}

\section{A Formula for the Connected Bosonic $(n,m)$-Point Functions}
\label{sec-main}

In this section,
we derive an explicit formula for the connected bosonic $(n,m)$-point functions
of $\tau_f^B$.
This is  our main result of this paper.

First we need the following combinatorial result
(see \cite[Prop. 4.1]{wy}):
\begin{Proposition}
[\cite{wy}]
Assume $\xi(x,y)$ is a function with $\xi(x,y)=-\xi(y,x)$,
and for each $n\geq 1$ we define an anti-symmetric matrix $M(n)$ of size $2n\times 2n$ by:
\begin{equation*}
M(n)_{ij} =
\xi\big((-1)^i z_{\lceil \frac{i}{2} \rceil }, (-1)^j z_{\lceil \frac{j}{2} \rceil })
\end{equation*}
for $1\leq i<j\leq 2n$.
Define a family of functions $\{\varphi(z_1,\cdots,z_n)\}_{n\geq 1}$ by:
\begin{equation*}
\varphi(z_1,\cdots,z_n) = \Pf(M(n)_{ij})_{1\leq i,j \leq 2n}
\end{equation*}
for every $n$,
then the connected version
\begin{equation*}
\varphi^c (z_1,\cdots,z_n) =
\sum_{I_1\sqcup \cdots \sqcup I_k=[n]}
\frac{(-1)^{k-1}}{k} \varphi(z_{I_1})
\cdots \varphi(z_{I_k}),
\end{equation*}
is given by:
\begin{equation*}
\varphi^c (z_1,\cdots,z_n) = \sum_{\substack{\text{$n$-cycles $\sigma$} \\ \epsilon_2,\cdots,\epsilon_n \in\{\pm 1\}}}
(-\epsilon_2\cdots\epsilon_n) \cdot
\prod_{i=1}^n \xi(\epsilon_{\sigma(i)} z_{\sigma(i)}, -\epsilon_{\sigma(i+1)} z_{\sigma(i+1)}),
\end{equation*}
where we use the conventions
$\epsilon_{1} =1$ and
$\sigma(n+1)=\sigma(1)$.
\end{Proposition}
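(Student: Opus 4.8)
The plan is to expand $\Pf(M(n))$ over perfect matchings, re-read each matching as a decorated $2$-regular multigraph on the $n$ variables, and extract the connected part by the same M\"obius inversion already used in \S\ref{sec-conn-boson}. So first, $\varphi(z_1,\dots,z_n)=\Pf(M(n))=\sum_\pi\mathrm{sgn}(\pi)\prod_{\{i,j\}\in\pi,\,i<j}M(n)_{ij}$, the sum over perfect matchings $\pi$ of $\{1,\dots,2n\}$, with $\mathrm{sgn}(\pi)=(-1)^{\mathrm{cr}(\pi)}$, where $\mathrm{cr}(\pi)$ is the number of crossings of the chord diagram of $\pi$. View each $z_a$ as carrying two \emph{ports}, the index $2a-1$ (with sign $(-1)^{2a-1}=-1$) and the index $2a$ (sign $+1$), so that $M(n)_{ij}=\xi(\varepsilon z_a,\varepsilon' z_b)$ when $i$ is the sign-$\varepsilon$ port of $z_a$ and $j$ the sign-$\varepsilon'$ port of $z_b$. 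A matching $\pi$ then induces a $2$-regular multigraph $\Gamma(\pi)$ on $[n]$ (a loop at $a$ for the pair $\{2a-1,2a\}$; an edge $ab$ for a pair joining a port of $a$ to one of $b$), hence a disjoint union of cycles, one per connected component (allowing loops and double edges).

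The key point is that both $\mathrm{sgn}(\pi)$ and the monomial $\prod M(n)_{ij}$ factorize over the components of $\Gamma(\pi)$. The monomial does so visibly. For the sign, if the components sit on vertex sets $V_1,\dots,V_k$ and $H_\ell$ is the set of ports of $V_\ell$, then $\pi$ restricts to a matching $\pi_\ell$ of $H_\ell$ and one must check $\mathrm{sgn}(\pi)=\prod_\ell\mathrm{sgn}(\pi_\ell)$ (each factor computed in the induced order of $H_\ell$); by the crossing formula this reduces to showing that for $\ell\neq\ell'$ the number of crossings between chords of $\pi_\ell$ and chords of $\pi_{\ell'}$ is even. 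Here the structure of $M(n)$ enters: given a chord of $\pi_{\ell'}$ with endpoints $p<q$, since $p,q$ are ports of vertices outside $V_\ell$ and the two ports of each vertex are consecutive integers, the open interval $(p,q)$ contains either both ports or neither port of every vertex of $V_\ell$; hence it contains an even number of ports of $V_\ell$, of which those matched to each other by $\pi_\ell$ come in pairs, so the number of $\pi_\ell$-chords with exactly one endpoint in $(p,q)$ — i.e. crossing that chord — is even. Grouping matchings by the partition of $[n]$ they induce, we get
\[
\varphi(z_{[n]})=\sum_{P}\ \prod_{B\in P}\Phi(z_B),
\]
summed over set partitions $P$ of $[n]$, where $\Phi(z_B)$ is the same sum over matchings of the ports of $B$ whose multigraph is connected. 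Since this relation is the first of the two mutually inverse M\"obius formulas of \S\ref{sec-conn-boson} (with $\Phi$ in the role of the connected correlator), it follows that $\varphi^c(z_{[n]})=\Phi(z_{[n]})$: the connected correlator is the sum over matchings whose multigraph is a single cycle on all of $[n]$.

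It remains to put $\Phi(z_{[n]})$ into the stated form. I would set up a bijection between single-cycle matchings $\pi$ and pairs $(\sigma,(\varepsilon_2,\dots,\varepsilon_n))$ with $\sigma$ an $n$-cycle and $\varepsilon_j\in\{\pm1\}$, where $\varepsilon_1:=1$: traverse the cycle $\Gamma(\pi)$ starting at vertex $1$ and leaving through its port $2$, obtaining the directed vertex sequence $\sigma(1)=1,\sigma(2),\dots,\sigma(n)$, and at each vertex $u$ read off from $\pi$ which port is \emph{outgoing}, setting $\varepsilon_u=+1$ if it is $2u$ and $\varepsilon_u=-1$ if it is $2u-1$ (so the incoming port of $u$ has sign $-\varepsilon_u$, and $\varepsilon_1=+1$ automatically); the inverse map joins the outgoing port of $\sigma(i)$ to the incoming port of $\sigma(i+1)$ for each $i$. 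Both sides have $(n-1)!\,2^{n-1}$ elements, so this is a bijection, and under it the edge from $\sigma(i)$ to $\sigma(i+1)$ contributes the factor $\xi(\varepsilon_{\sigma(i)}z_{\sigma(i)},-\varepsilon_{\sigma(i+1)}z_{\sigma(i+1)})$, up to a sign $\pm1$ coming from which of the two port-indices of that edge is smaller.

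The only delicate point — and the one I expect to be the main obstacle — is the remaining sign bookkeeping: one must prove that $\mathrm{sgn}(\pi)$ times the product over the $n$ edges of those port-ordering signs equals $-\varepsilon_2\cdots\varepsilon_n$, the antisymmetry $\xi(x,y)=-\xi(y,x)$ being exactly what lets the port-ordering signs be absorbed consistently. I would check this first for the standard datum $\sigma=(1\,2\,\cdots\,n)$ with all $\varepsilon_j=1$, where $\mathrm{sgn}(\pi)$ and every port-ordering sign are explicit and the product works out to $-1$, and then verify that it is preserved under the two elementary moves generating all data from the standard one: flipping a single $\varepsilon_u$ with $u\neq1$ (which transposes the adjacent port-indices $2u-1,2u$ in $\pi$, changing $\mathrm{sgn}(\pi)$ and the port-ordering signs of the two edges at $u$ by a net $-1$), and transposing two consecutive entries of $\sigma(2),\dots,\sigma(n)$. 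This is the computation carried out in \cite{wy}. Summing over all $(\sigma,\varepsilon)$ then yields exactly the claimed formula for $\varphi^c$.
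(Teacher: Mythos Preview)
The paper does not prove this proposition at all: it is quoted verbatim from \cite[Prop.~4.1]{wy} and used as a black box, so there is no ``paper's own proof'' to compare against. Your outline is therefore strictly more than what the paper offers.

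As for correctness: your strategy is sound and is essentially the argument of \cite{wy}. The key structural step --- that $\mathrm{sgn}(\pi)$ factorizes over the connected components of the induced $2$-regular multigraph --- is justified exactly as you say, and the crucial observation that the two ports $2a-1,2a$ of each vertex are \emph{consecutive} integers (so any interval with endpoints outside $\{2a-1,2a\}$ contains both or neither) is precisely what makes the inter-component crossing count even. The identification $\varphi^c=\Phi$ via M\"obius inversion, and the bijection between single-cycle matchings and pairs $(\sigma,(\epsilon_2,\dots,\epsilon_n))$, are both correct; a quick sanity check on the standard datum $\sigma=(1\,2\cdots n)$, $\epsilon_j\equiv 1$ gives the matching $\{1,2n\},\{2,3\},\dots,\{2n-2,2n-1\}$ with $\mathrm{sgn}=+1$, and the product $\xi(-z_1,z_n)\prod_{i=1}^{n-1}\xi(z_i,-z_{i+1})$ agrees with $-\prod_{i=1}^n\xi(z_{\sigma(i)},-z_{\sigma(i+1)})$ after one application of antisymmetry, as required. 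The one place you leave incomplete is the invariance of the overall sign under the elementary moves (flipping one $\epsilon_u$, transposing adjacent $\sigma$-entries); you correctly flag this as the main bookkeeping step and defer to \cite{wy}, which is fair since that is exactly where the paper sends the reader too.
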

\begin{Remark}
The above proposition is a Pfaffian-analogue of \cite[Prop. 5.2]{zhou1}.
\end{Remark}

Now we take the anti-symmetric matrix $M(n)$ to be
the matrix $(C_{i,j})$ in Theorem \ref{thm-disconn-b-nmpt}.
Then Theorem \ref{thm-disconn-b-nmpt} and the above proposition
tells us that the connected bosonic $(n,m)$-point function
$\langle H(z_{[n+m]}) \rangle_{f;n,m}^c$
can be represented as a summation over $(n+m)$-cycles.
And by combining this with Proposition \ref{prop-relation-fe},
we obtain:
\begin{Theorem}
\label{thm-main-conn-nmpt}
We have:
\be
\label{eq-mainthm-sumcyclesB}
\begin{split}
&\sum_{\substack{j_1,\cdots,j_n>0:\text{ odd}\\k_1,\cdots,k_m >0:\text{ odd} }}
\frac{\pd^{n+m} F_f^B(\bm t^+,\bm t^-)}
{\pd t_{j_1}^+\cdots\pd t_{j_n}^+ \pd t_{k_1}^-\cdots\pd t_{k_m}^-}
\Big|_{\bm t=0} \cdot
\prod_{a=1}^n z_a^{-j_a} \prod_{b=1}^m z_{n+b}^{k_b}\\
=&
\sum_{\substack{\text{$(n+m)$-cycles $\sigma$} \\ \epsilon_2,\cdots,\epsilon_{n+m} \in\{\pm 1\}}}
(-\epsilon_2\cdots\epsilon_{n+m}) \cdot
\prod_{i=1}^{n+m} \xi(\epsilon_{\sigma(i)} z_{\sigma(i)}, -\epsilon_{\sigma(i+1)} z_{\sigma(i+1)})\\
&- (\delta_{n,2}\delta_{m,0} + \delta_{n,0}\delta_{m,2} )
\cdot i_{z_1,z_2} \frac{z_1z_2 (z_1^2+z_2^2)}{2(z_1^2-z_2^2)^2},
\end{split}
\ee
where for $\sigma(i) < \sigma(i+1)$,
\be
\begin{split}
&\xi(\epsilon_{\sigma(i)} z_{\sigma(i)}, -\epsilon_{\sigma(i+1)} z_{\sigma(i+1)})\\
=& \begin{cases}
A_f \big(
\epsilon_{\sigma(i)} z_{\sigma(i)}, -\epsilon_{\sigma(i+1)} z_{\sigma(i+1)}
\big),
& \text{ if $\sigma(i)\leq n<\sigma(i+1)$;}\\
A_0 \big(
\epsilon_{\sigma(i)} z_{\sigma(i)}, -\epsilon_{\sigma(i+1)} z_{\sigma(i+1)}
\big),
& \text{ otherwise,}
\end{cases}
\end{split}
\ee
and $\xi(\epsilon_{\sigma(i)} z_{\sigma(i)}, -\epsilon_{\sigma(i+1)} z_{\sigma(i+1)})
= - \xi( -\epsilon_{\sigma(i+1)} z_{\sigma(i+1)}, \epsilon_{\sigma(i)} z_{\sigma(i)})$
if $\sigma(i) > \sigma(i+1)$.
Here $A_f$ is the series defined by \eqref{eq-def-Af},
and we use the conventions
$\epsilon_{1} =1$ and
$\sigma(n+m+1)=\sigma(1)$.
\end{Theorem}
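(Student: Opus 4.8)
The plan is to combine three facts already in place: Theorem \ref{thm-disconn-b-nmpt}, the combinatorial Proposition \cite[Prop. 4.1]{wy} recalled above, and Proposition \ref{prop-relation-fe}. First I would apply the combinatorial Proposition with the antisymmetric matrix $M(n+m)$ taken to be the matrix $(C_{i,j})$ of Theorem \ref{thm-disconn-b-nmpt}. By construction its entries are of the shape $\xi\big((-1)^iz_{\lceil i/2\rceil},(-1)^jz_{\lceil j/2\rceil}\big)$ required by that Proposition, with the kernel $\xi$ equal to $A_f$ on the mixed block (one argument carrying a $\bm t^+$-index, the other a $\bm t^-$-index) and to $A_0$ otherwise; the block-diagonal self-contractions $C_{2s-1,2s}$ vanish because of the normal ordering $H(z)=-\half :\phi(-z)\phi(z):$, and these vanishing entries play no role in the cycle sum since a cycle never has $\sigma(i)=\sigma(i+1)$. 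Since $\langle H(z_{[n+m]})\rangle_{f;n,m}=\Pf(C_{i,j})$ by Theorem \ref{thm-disconn-b-nmpt}, and since the M\"obius inversion defining $\langle H(z_{[n+m]})\rangle_{f;n,m}^c$ is exactly the operation $\varphi\mapsto\varphi^c$ of the combinatorial Proposition (applied also to every sub-collection of the $z$'s), I obtain $\langle H(z_{[n+m]})\rangle_{f;n,m}^c$ as the sum over $(n+m)$-cycles $\sigma$ and signs $\epsilon_2,\dots,\epsilon_{n+m}$ appearing on the right-hand side of \eqref{eq-mainthm-sumcyclesB}, with $\xi$ read off from the two cases of $C_{i,j}$ and the convention $\xi(u,v)=-\xi(v,u)$ fixing the order of evaluation when $\sigma(i)>\sigma(i+1)$.

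Next I would invoke Proposition \ref{prop-relation-fe}, which identifies $\langle H(z_{[n+m]})\rangle_{f;n,m}^c$ with the double sum of free-energy derivatives $\pd^{n+m}F_f^B/(\pd t_{j_1}^+\cdots\pd t_{k_m}^-)\big|_{\bm t=0}$ weighted by $\prod_a z_a^{-j_a}\prod_b z_{n+b}^{k_b}$, plus the correction term $(\delta_{n,2}\delta_{m,0}+\delta_{n,0}\delta_{m,2})\,i_{z_1,z_2}\,z_1z_2(z_1^2+z_2^2)/\big(2(z_1^2-z_2^2)^2\big)$. Equating this with the cycle formula from the first step and moving the Kronecker-delta correction to the other side yields \eqref{eq-mainthm-sumcyclesB}. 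Thus the core of the argument is a direct substitution once the earlier statements are available.

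The step demanding the most care is the bookkeeping in the first paragraph: checking that the two-kernel ($A_f$ versus $A_0$) structure and the sign and ordering conventions built into $(C_{i,j})$ are genuinely compatible with the single antisymmetric $\xi$ of \cite[Prop. 4.1]{wy} — effectively one treats $z_1,\dots,z_n$ and $z_{n+1},\dots,z_{n+m}$ as carrying two different ``colors'' and lets $\xi$ depend only on the pair of colors, the purely combinatorial proof of that Proposition going through verbatim — and confirming that the vanishing entries $C_{2s-1,2s}$, for which the naive value $A_0(-z_s,z_s)$ is not even a well-defined formal series, never enter the resulting cycle sum (which, incidentally, is why the specialization $z_{2s-1}\to-z_{2s}$ made in passing to Theorem \ref{thm-disconn-b-nmpt} causes no divergence here). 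The separate low-order cases $n+m\le 2$ are already absorbed into the proof of Proposition \ref{prop-relation-fe}, so nothing further is needed there.
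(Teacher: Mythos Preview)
Your proposal is correct and follows essentially the same approach as the paper: apply the combinatorial Pfaffian proposition \cite[Prop.~4.1]{wy} to the matrix $(C_{i,j})$ of Theorem~\ref{thm-disconn-b-nmpt} to obtain the cycle expansion of $\langle H(z_{[n+m]})\rangle_{f;n,m}^c$, then combine with Proposition~\ref{prop-relation-fe}. In fact you are more explicit than the paper about the two points that require checking---that the block structure $A_f$ versus $A_0$ is compatible with the purely combinatorial proof of \cite[Prop.~4.1]{wy} (the ``coloring'' argument), and that the vanishing diagonal entries $C_{2s-1,2s}=0$ cause no trouble---whereas the paper simply asserts that taking $M$ to be $(C_{i,j})$ and invoking the proposition gives the result.
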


The above theorem can be simplified such that one can actually get rid of the choice of
the signs $\epsilon_2,\cdots,\epsilon_{n+m} = \pm 1$.
In fact,
we have the following:
\begin{Theorem}
\label{thm-main-conn-nmpt-2}
We have:
\be
\label{eq-mainthm-sumcyclesB-2}
\begin{split}
&\sum_{\substack{j_1,\cdots,j_n>0:\text{ odd}\\k_1,\cdots,k_m >0:\text{ odd} }}
\frac{\pd^{n+m} F_f^B(\bm t^+,\bm t^-)}
{\pd t_{j_1}^+\cdots\pd t_{j_n}^+ \pd t_{k_1}^-\cdots\pd t_{k_m}^-}
\Big|_{\bm t=0} \cdot
\prod_{a=1}^n z_a^{-j_a} \prod_{b=1}^m z_{n+b}^{k_b}\\
=& -2^{n+m-1} \cdot \Big[
\sum_{\text{$(n+m)$-cycles $\sigma$} }
\prod_{i=1}^{n+m} \xi( z_{\sigma(i)}, - z_{\sigma(i+1)}) \Big]_{\text{odd}}\\
&- (\delta_{n,2}\delta_{m,0} + \delta_{n,0}\delta_{m,2} )
\cdot i_{z_1,z_2} \frac{z_1z_2 (z_1^2+z_2^2)}{2(z_1^2-z_2^2)^2},
\end{split}
\ee
where $[\cdot]_{\text{odd}}$ means taking the terms of odd degrees in every $z_i$;
and for $\sigma(i) < \sigma(i+1)$,
\be
\xi( z_{\sigma(i)}, - z_{\sigma(i+1)})\\
= \begin{cases}
A_f (
 z_{\sigma(i)}, - z_{\sigma(i+1)}),
& \text{ if $\sigma(i)\leq n<\sigma(i+1)$;}\\
A_0 (
 z_{\sigma(i)}, - z_{\sigma(i+1)}),
& \text{ otherwise,}
\end{cases}
\ee
and for $\sigma(i) > \sigma(i+1)$,
\be
\xi( z_{\sigma(i)}, - z_{\sigma(i+1)})\\
= \begin{cases}
-A_f (
  - z_{\sigma(i+1)}, z_{\sigma(i)}),
& \text{ if $\sigma(i)> n\geq \sigma(i+1)$;}\\
-A_0 (
 - z_{\sigma(i+1)},z_{\sigma(i)}),
& \text{ otherwise,}
\end{cases}
\ee
\end{Theorem}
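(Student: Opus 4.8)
The plan is to deduce Theorem~\ref{thm-main-conn-nmpt-2} from Theorem~\ref{thm-main-conn-nmpt} by showing that the sum over the sign vectors $(\epsilon_2,\dots,\epsilon_{n+m})\in\{\pm1\}^{n+m-1}$ collapses, up to a factor of $2^{n+m-1}$, to the single term with all $\epsilon_i=1$, provided one then extracts the odd-degree part in each $z_i$. The key observation is that the free energy $F_f^B$ depends only on the odd time variables $t_j^\pm$, so the left-hand side of \eqref{eq-mainthm-sumcyclesB} is a series in which every $z_a$ (resp.\ $z_{n+b}$) appears only with odd exponents $-j_a$ (resp.\ $k_b$). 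Hence applying $[\,\cdot\,]_{\mathrm{odd}}$ to the left-hand side changes nothing, and we are free to apply it to the right-hand side as well. The strategy is then: (i) understand how each summand on the right-hand side of \eqref{eq-mainthm-sumcyclesB} transforms under the substitution $z_i\mapsto\epsilon_i z_i$; (ii) show that, after taking the odd part, all $2^{n+m-1}$ sign choices give the same contribution as the all-plus choice; (iii) collect the resulting factor and the sign $(-\epsilon_2\cdots\epsilon_{n+m})$.

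First I would record the parity behaviour of the building block $\xi$. From the definitions \eqref{eq-def-Af} of $A_f$ and $A_0$, each entry $\xi(z_i,-z_j)$ is a sum of a constant term $-\tfrac14$ and a series $-\tfrac12\sum_k(\cdots)z_i^{-k}z_j^{k}$ whose general term has the \emph{same} parity in $z_i$ and in $z_j$ (both equal to the parity of $k$); the same holds for the ``$\sigma(i)>\sigma(i+1)$'' entries by the antisymmetry relation $\xi(z_i,-z_j)=-\xi(-z_j,z_i)$, which only flips an overall sign. Therefore in the product $\prod_{i=1}^{n+m}\xi(z_{\sigma(i)},-z_{\sigma(i+1)})$ around an $(n+m)$-cycle $\sigma$, each variable $z_\ell$ occurs in exactly two factors (namely the ones indexed by the edges into and out of $\ell$ in the cycle), and in a monomial contributing to $[\,\cdot\,]_{\mathrm{odd}}$ the two powers of $z_\ell$ appearing in those two factors must both be odd. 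Next I would substitute $z_i\mapsto\epsilon_i z_i$: a monomial $\prod_\ell z_\ell^{a_\ell}$ in the all-plus product picks up the factor $\prod_\ell\epsilon_\ell^{a_\ell}$, and on the odd part every $a_\ell$ is odd, so this factor equals $\prod_\ell\epsilon_\ell=\epsilon_2\cdots\epsilon_{n+m}$ (using $\epsilon_1=1$). This exactly cancels the prefactor $(-\epsilon_2\cdots\epsilon_{n+m})$ up to the global sign $-1$, and one must also check that the substitution $z_i\mapsto\epsilon_i z_i$ does not move the monomial out of the odd part (it does not, since it only rescales). Finally, since this computation is independent of which $\epsilon$ we chose, summing over all $2^{n+m-1}$ sign vectors multiplies the all-plus, odd-part contribution by $2^{n+m-1}$, yielding the coefficient $-2^{n+m-1}$ in \eqref{eq-mainthm-sumcyclesB-2}.

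The additional ``correction'' term $(\delta_{n,2}\delta_{m,0}+\delta_{n,0}\delta_{m,2})\, i_{z_1,z_2}\tfrac{z_1z_2(z_1^2+z_2^2)}{2(z_1^2-z_2^2)^2}$ is already an odd-degree series in $z_1$ and $z_2$ (it equals $\sum_{i>0\text{ odd}}\tfrac{i}{2}z_1^{-i}z_2^{i}$ by Proposition~\ref{prop-relation-fe}), so it is unaffected by $[\,\cdot\,]_{\mathrm{odd}}$ and carries over verbatim from \eqref{eq-mainthm-sumcyclesB} to \eqref{eq-mainthm-sumcyclesB-2}. The only point requiring care — and the main obstacle — is step (ii): one must verify that after the rescaling $z_i\mapsto\epsilon_i z_i$ the product over the cycle, \emph{restricted to its odd part}, literally coincides term by term with the all-plus product's odd part times $\epsilon_2\cdots\epsilon_{n+m}$, rather than merely agreeing after some further cancellation. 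This hinges on the parity-matching property of $\xi$ noted above (equal parity in its two arguments) together with the cycle structure guaranteeing each $z_\ell$ appears in precisely two factors; I would state this as a short lemma and check it directly from the explicit form of $A_f$ and $A_0$. Granting that, the rest is bookkeeping of signs and the factor $2^{n+m-1}$, and the piecewise description of $\xi$ for $\sigma(i)>\sigma(i+1)$ in the statement is just the antisymmetry relation written out in the two cases $\sigma(i)>n\geq\sigma(i+1)$ and otherwise.
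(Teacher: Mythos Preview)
Your approach is correct and is exactly the paper's: the $\epsilon$-dependent summand in Theorem~\ref{thm-main-conn-nmpt} is obtained from the all-plus summand by the substitution $z_\ell\mapsto\epsilon_\ell z_\ell$, so on the odd-in-each-$z_\ell$ part it differs only by the factor $\epsilon_2\cdots\epsilon_{n+m}$, which cancels against the prefactor and yields $-2^{n+m-1}$ upon summing. Your substitution argument in the second paragraph is already a complete proof of this.

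However, your ``parity-matching'' digression is both unnecessary and incorrect. The claim that in a monomial contributing to $[\,\cdot\,]_{\mathrm{odd}}$ the two individual powers of $z_\ell$ coming from the two adjacent factors must \emph{both} be odd is false: if $z_\ell$ carries exponent $k_{i-1}$ from one factor and $-k_i$ from the other, oddness of the total $k_{i-1}-k_i$ forces $k_{i-1}$ and $k_i$ to have \emph{opposite} parities, not the same. Fortunately nothing in your actual argument uses this; the substitution computation only needs that the \emph{total} exponent $a_\ell$ is odd, which is the definition of $[\,\cdot\,]_{\mathrm{odd}}$. So the ``main obstacle'' you flag in the last paragraph is not an obstacle at all, and you should simply delete the parity-matching discussion.
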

\begin{proof}
For a fixed cycle $\sigma$ and a fixed $j$,
$z_j$ appears only in two terms
\begin{equation*}
\xi (\pm z_i, -\epsilon_j z_j) \cdot \xi (\epsilon_j z_j, \pm z_k)
\end{equation*}
in $\prod_{i=1}^{n+m} \xi(\epsilon_{\sigma(i)} z_{\sigma(i)}, -\epsilon_{\sigma(i+1)} z_{\sigma(i+1)})$
(where $i$ and $k$ are adjacent to $j$ in this cycle $\sigma$),
thus replacing $\epsilon_j$ by $-\epsilon_j$ is equivalent to replacing $z_j$ by $-z_j$.
Then replacing $\epsilon_j$ by $-\epsilon_j$ does not change the terms
with odd orders in $z_j$ in the product
\begin{equation*}
\epsilon_2\cdots\epsilon_{n+m}
\prod_{i=1}^{n+m} \xi(\epsilon_{\sigma(i)} z_{\sigma(i)}, -\epsilon_{\sigma(i+1)} z_{\sigma(i+1)}).
\end{equation*}
Moreover,
we already know that the order of $z_j$ in
the left-hand side of \eqref{eq-mainthm-sumcyclesB} is always an odd number,
thus the conclusion holds by taking the canonical choice of signs
$\epsilon_2 = \cdots = \epsilon_{n+m} =1$ in the right-hand side
and then restricting to terms of odd degrees.
\end{proof}

Similar to the case of the diagonal 2d Toda lattice tau-functions (see \cite[\S 4]{wy2}),
one has the following vanishing property:
\begin{Corollary}
\label{cor-vanish}
One has
\begin{equation*}
\sum_{\substack{j_1,\cdots,j_n>0:\text{ odd}\\k_1,\cdots,k_m >0:\text{ odd} }}
\frac{\pd^{m+n} F_f^B(\bm t^+,\bm t^-)}{\pd t_{j_1}^+ \cdots
\pd t_{j_n}^+ \pd t_{k_1}^- \cdots \pd t_{k_m}^- }
\Big|_{\bm t=0} =0
\end{equation*}
unless $j_1+ j_2 +\cdots +j_n = k_1+k_2+\cdots +k_m$.
\end{Corollary}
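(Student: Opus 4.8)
The plan is to obtain the vanishing from a homogeneity symmetry of $\tau_f^B$ that is visible on the fermionic side. First I would introduce the degree operator $D$ on $\cF_B$ determined by $[D,\phi_m]=m\phi_m$ for all $m\in\bZ$ together with $D|0\rangle=0$; concretely $D$ is diagonal in the basis of $\cF_B$, acting on $\phi_{k_1}\cdots\phi_{k_r}|0\rangle$ (with $k_1>\cdots>k_r\geq 0$) as multiplication by $k_1+\cdots+k_r$. Since $[D,\phi_i\phi_j]=(i+j)\phi_i\phi_j$, the definition \eqref{eq-def-boson} immediately gives $[D,H_n]=-nH_n$ for every odd $n$, hence also $[D,H_{-n}]=nH_{-n}$, whereas $[D,\phi_m\phi_{-m}]=0$ shows that $D$ commutes with $\hf$.

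Next I would conjugate \eqref{eq-def-tauf} by $\lambda^{D}$ for a formal parameter $\lambda\neq0$. From $[D,H_{\pm n}]=\mp nH_{\pm n}$ one has $\lambda^{D}H_n\lambda^{-D}=\lambda^{-n}H_n$ and $\lambda^{D}H_{-n}\lambda^{-D}=\lambda^{n}H_{-n}$, so by \eqref{eq-def-Gamma+-} the conjugate of $\Gamma_+^B(\bm t^+)$ is $\Gamma_+^B$ with each $t_n^+$ replaced by $\lambda^{-n}t_n^+$, and the conjugate of $\Gamma_-^B(\bm t^-)$ is $\Gamma_-^B$ with each $t_n^-$ replaced by $\lambda^{n}t_n^-$, while $\lambda^{D}e^{\hf}\lambda^{-D}=e^{\hf}$. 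Using $\langle0|\lambda^{D}=\langle0|$ and $\lambda^{-D}|0\rangle=|0\rangle$, I would deduce the scaling identity
\be
\tau_f^B\big(\{\lambda^{-n}t_n^+\},\{\lambda^{n}t_n^-\}\big)=\tau_f^B(\bm t^+,\bm t^-),
\ee
and therefore the same identity for the free energy $F_f^B=\log\tau_f^B$.

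Finally I would differentiate the scaling identity for $F_f^B$ and set $\bm t=0$, which is a fixed point of the rescaling. By the chain rule each $\pd/\pd t_{j_a}^+$ contributes a factor $\lambda^{-j_a}$ and each $\pd/\pd t_{k_b}^-$ a factor $\lambda^{k_b}$, giving
\be
\frac{\pd^{m+n}F_f^B}{\pd t_{j_1}^+\cdots\pd t_{j_n}^+\pd t_{k_1}^-\cdots\pd t_{k_m}^-}\Big|_{\bm t=0}
=\lambda^{-(j_1+\cdots+j_n)+(k_1+\cdots+k_m)}\,
\frac{\pd^{m+n}F_f^B}{\pd t_{j_1}^+\cdots\pd t_{j_n}^+\pd t_{k_1}^-\cdots\pd t_{k_m}^-}\Big|_{\bm t=0}.
\ee
Since this holds for every $\lambda\neq0$, the left-hand side must vanish unless $j_1+\cdots+j_n=k_1+\cdots+k_m$, which is the assertion. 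I do not expect a serious obstacle here; the only points that require attention are formal, namely that $\hf$ is homogeneous of degree $0$ (clear, since $:\phi_m\phi_{-m}:$ differs from $\phi_m\phi_{-m}$ by a scalar and $m+(-m)=0$) and that conjugation by $\lambda^{D}$ is meaningful on the operators involved (standard, as they are all locally finite). Alternatively, the statement can be read off Theorem \ref{thm-main-conn-nmpt-2} directly: by the explicit formulas for $A_f$, $A_0$ and the antisymmetrization rule, each factor $\xi(z_{\sigma(i)},-z_{\sigma(i+1)})$ is a constant plus a sum of monomials $z_{\sigma(i)}^{-k}z_{\sigma(i+1)}^{k}$, so the whole right-hand side of \eqref{eq-mainthm-sumcyclesB-2}, including the term $i_{z_1,z_2}\frac{z_1z_2(z_1^2+z_2^2)}{2(z_1^2-z_2^2)^2}=\sum_{i>0\text{ odd}}\frac{i}{2}z_1^{-i}z_2^{i}$, consists only of monomials of total $z$-degree $0$, whereas $\prod_{a=1}^n z_a^{-j_a}\prod_{b=1}^m z_{n+b}^{k_b}$ has total $z$-degree $-\sum_a j_a+\sum_b k_b$; equating coefficients gives the claim.
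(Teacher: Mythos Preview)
Your proposal is correct. The alternative argument you sketch at the end is exactly the paper's own proof: the paper observes that $A_0(w,z)$, $A_f(w,z)$ and $i_{z_1,z_2}\frac{z_1z_2(z_1^2+z_2^2)}{2(z_1^2-z_2^2)^2}$ are all sums of monomials $c_k w^{-k}z^k$, so every term on the right-hand side of \eqref{eq-mainthm-sumcyclesB-2} has total $z$-degree zero, and comparing with the degree $-\sum_a j_a+\sum_b k_b$ of $\prod z_a^{-j_a}\prod z_{n+b}^{k_b}$ on the left gives the vanishing. Your primary argument, via the degree operator $D$ with $[D,\phi_m]=m\phi_m$ and the resulting scaling invariance $\tau_f^B(\{\lambda^{-n}t_n^+\},\{\lambda^{n}t_n^-\})=\tau_f^B(\bm t^+,\bm t^-)$, is a genuinely different and more conceptual route: it never uses Theorem~\ref{thm-main-conn-nmpt-2} or the explicit form of $A_f$, only that $\hf$ is built from $:\phi_m\phi_{-m}:$ and hence commutes with $D$. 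This has the advantage of applying immediately at the level of $\tau_f^B$ itself (not just the derivatives at $\bm t=0$), and it makes transparent that the vanishing is a symmetry fact rather than a by-product of the cycle formula; the paper's approach, by contrast, has the virtue of being a one-line consequence of the main theorem already established.
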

\begin{proof}
Recall that $A_0(w,z)$, $A_f(w,z)$, and $i_{w,z} \frac{wz (w^2+z^2)}{2(w^2-z^2)^2}$
are all of the form:
\begin{equation*}
\sum_{k\geq 0} c_k \cdot w^{-k} z^k,
\end{equation*}
thus the total order of non-negative powers equals to the total order of negative powers
in the right-hand side of \eqref{eq-mainthm-sumcyclesB-2}.
\end{proof}

Now we give some examples of the bosonic $(n,m)$-point functions
for small $(n,m)$ using the above theorem.
First notice that by Corollary \ref{cor-vanish} we easily see:
\begin{Corollary}
\label{cor-vanish-2}
For $n,m\geq 1$,
we have:
\be
\begin{split}
&\frac{\pd^{n} F_f^B(\bm t^+,\bm t^-)}
{\pd t_{j_1}^+\cdots\pd t_{j_n}^+ }=0,
\qquad \forall j_1,\cdots,j_n>0:\text{ odd};\\
&\frac{\pd^{m} F_f^B(\bm t^+,\bm t^-)}
{ \pd t_{k_1}^-\cdots\pd t_{k_m}^-} =0,
\qquad \forall k_1,\cdots,k_m >0:\text{ odd}.
\end{split}
\ee
\end{Corollary}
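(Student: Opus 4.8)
The plan is to establish the sharper fact that the free energy $F_f^B$ carries no purely-$\bm t^+$ and no purely-$\bm t^-$ dependence, which is exactly the content of the two displayed identities once the pure derivative is read as a function on the corresponding coordinate slice. The key reduction is to the two restriction formulas
\be
\label{eq-plan-restrict}
\tau_f^B(\bm t^+, 0) \equiv 1, \qquad \tau_f^B(0,\bm t^-) \equiv 1 ,
\ee
as formal power series. Granting these, $F_f^B(\bm t^+,0) = \log\tau_f^B(\bm t^+,0)\equiv 0$ and $F_f^B(0,\bm t^-)\equiv 0$; and since differentiation in the variables $\bm t^+$ commutes with the substitution $\bm t^- = 0$, I obtain
\be
\begin{split}
\frac{\pd^n F_f^B}{\pd t_{j_1}^+\cdots \pd t_{j_n}^+}\Big|_{\bm t^-=0}
&= \frac{\pd^n}{\pd t_{j_1}^+\cdots \pd t_{j_n}^+}\,F_f^B(\bm t^+,0)\\
&\equiv 0 ,
\end{split}
\ee
identically in $\bm t^+$, together with the mirror identity for the $\bm t^-$ derivatives. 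This is the asserted vanishing.

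The verification of \eqref{eq-plan-restrict} is immediate from the definition \eqref{eq-def-tauf} and the annihilation properties already recorded in \S\ref{sec-pre} and \S\ref{sec-ferm-2n2m}. For the first, setting $\bm t^-=0$ gives $\Gamma_-^B(0)=\Id$, so $\tau_f^B(\bm t^+,0)=\langle 0|\Gamma_+^B(\bm t^+)\exp(\hf)|0\rangle$. Now $\exp(\hf)|0\rangle=|0\rangle$, and since $H_n|0\rangle=0$ for every odd $n>0$ the operator $\Gamma_+^B(\bm t^+)=\exp\big(\sum_{n>0:\text{ odd}}t_n^+H_n\big)$ also fixes $|0\rangle$; hence $\tau_f^B(\bm t^+,0)=\langle 0|0\rangle=1$. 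The second restriction is the transpose of this: from $\langle 0|\phi_m=0$ for $m>0$ one gets $\langle 0|\hf=0$, hence $\langle 0|\exp(\hf)=\langle 0|$, and from the dual annihilation property $\langle 0|H_{-n}=0$ for odd $n>0$ (the transpose of $H_n|0\rangle=0$, since every ket of $\cF_B$ has nonnegative principal degree) one gets $\langle 0|\Gamma_-^B(\bm t^-)=\langle 0|$, so $\tau_f^B(0,\bm t^-)=\langle 0|0\rangle=1$.

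The step that requires care is the precise scope of ``identically'', and this is why I would argue through \eqref{eq-plan-restrict} rather than through the cycle formula of Theorem \ref{thm-main-conn-nmpt-2}. The pure $\bm t^+$-derivative is not the zero function on the full $(\bm t^+,\bm t^-)$-space: by the balance condition of Corollary \ref{cor-vanish} every monomial of $F_f^B$ contains both positive and negative times, so for instance $\pd F_f^B/\pd t_1^+$ retains the mixed contribution governed by $\pd^2 F_f^B/\pd t_1^+\pd t_1^-|_{\bm t=0}\neq 0$. What \eqref{eq-plan-restrict} does yield --- and what the corollary records --- is the vanishing along the locus where the complementary family of times is switched off. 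Specialising further to $\bm t=0$ recovers the single-point statement, which also drops out of Corollary \ref{cor-vanish} by putting $m=0$ (resp.\ $n=0$): the required balance $j_1+\cdots+j_n=0$ with all $j_a>0$ and $n\geq 1$ is impossible, forcing the derivative to vanish at the origin.
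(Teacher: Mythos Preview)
Your argument is correct, and it takes a genuinely different route from the paper. The paper deduces Corollary~\ref{cor-vanish-2} in one line from Corollary~\ref{cor-vanish} (the balance condition $\sum j_a=\sum k_b$), which in turn rests on the explicit cycle formula of Theorem~\ref{thm-main-conn-nmpt-2} and hence on all of \S\S\ref{sec-ferm-2n2m}--\ref{sec-main}. You instead bypass that machinery entirely and compute $\tau_f^B(\bm t^+,0)=\tau_f^B(0,\bm t^-)=1$ directly from the definition~\eqref{eq-def-tauf}, using only the vacuum annihilation properties $e^{\hf}|0\rangle=|0\rangle$, $H_n|0\rangle=0$, and their duals. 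This is more elementary and self-contained. It also delivers a slightly sharper statement: vanishing on the whole slice $\bm t^-=0$ (resp.\ $\bm t^+=0$), not merely at the origin, which is the strongest true reading of the displayed identities. Your final paragraph, noting that the pure derivative cannot vanish identically on the full space and reconciling this with the $\bm t=0$ specialisation via Corollary~\ref{cor-vanish}, is a useful clarification of what the corollary actually asserts; the paper leaves this implicit. The one place your justification is a touch informal is the phrase ``the transpose of $H_n|0\rangle=0$'' for $\langle 0|H_{-n}=0$; the degree argument you sketch is correct, but a one-line direct check (each summand $\phi_i\phi_{-i+n}$ in $H_{-n}$ with $n>0$ has at least one factor of positive index, which can be anticommuted to the left to kill $\langle 0|$) would make it airtight.
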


\begin{Example}
By Corollary \ref{cor-vanish-2} we know that the first non-trivial example is $(n,m) = (1,1)$.
In this case,
there is only one $2$-cycle $\sigma = (12)$,
thus by Theorem \ref{thm-main-conn-nmpt} or \ref{thm-main-conn-nmpt-2} we have:
\begin{equation*}
\begin{split}
& \sum_{j,k>0:\text{ odd} }
\frac{\pd^{2} F_f^B(\bm t^+,\bm t^-)}
{\pd t_j^+ \pd t_{k}^-}
\Big|_{\bm t=0} \cdot
 z_1^{-j} z_2^{k} \\
 =&
A_f(z_1,-z_2) A_f(-z_1,z_2) - A_f(z_1,z_2) A_f(-z_1,-z_2)\\
=& 2 \big[ A_f(z_1,-z_2) A_f(-z_1,z_2) \big]_{\text{odd}}.
\end{split}
\end{equation*}
\end{Example}

\begin{Example}
Now consider the case $(n,m) = (2,1)$.
There are two $3$-cycle $\sigma = (123)$, $(132)$.
By Theorem \ref{thm-main-conn-nmpt-2} we have:
\begin{equation*}
\begin{split}
& \sum_{j,k,l>0:\text{ odd} }
\frac{\pd^{3} F_f^B(\bm t^+,\bm t^-)}
{\pd t_j^+  \pd t_{k}^+ \pd t_l^- }
\Big|_{\bm t=0} \cdot
 z_1^{-j} z_2^{-k} z_3^l \\
 =& \big[
4A_0(z_1,-z_2) A_f(z_2,-z_3) A_f(-z_1,z_3)
 - 4A_f(z_1,-z_3) A_f(-z_2,z_3) A_0(-z_1,z_2)
 \big]_{\text{odd}}.
\end{split}
\end{equation*}
In this case the above summation vanishes identically due to Corollary \ref{cor-vanish}.
\end{Example}

\begin{Example}
For $n+m=4$,
there are six $4$-cycles $\sigma= (1234)$, $(1243)$, $(1324)$, $(1342)$, $(1423)$, $(1432)$.
Then for $(n,m)=(3,1)$ we have:
\begin{equation*}
\begin{split}
& \sum_{i,j,k,l>0:\text{ odd} }
\frac{\pd^{4} F_f^B(\bm t^+,\bm t^-)}
{\pd t_i^+ \pd t_j^+  \pd t_{k}^+ \pd t_l^- }
\Big|_{\bm t=0} \cdot
z_1^{-i} z_2^{-j} z_3^{-k} z_4^l \\
 =
& \big[ 8 A_0(z_1,-z_2) A_0(z_2,-z_3) A_f(z_3,-z_4) A_f(-z_1,z_4) \\
& -8 A_0(z_1,-z_2) A_f(z_2,-z_4) A_f(-z_3,z_4) A_0(-z_1,z_3) \\
& -8 A_0(z_1,-z_3) A_0(-z_2,z_3) A_f(z_2,-z_4) A_f(-z_1,z_4) \\
& -8 A_0(z_1,-z_3) A_f(z_3,-z_4) A_f(-z_2,z_4) A_0(-z_1,z_2) \\
& -8 A_f(z_1,-z_4) A_f(-z_2,z_4) A_0(z_2,-z_3) A_0(-z_1,z_3) \\
& +8 A_f(z_1,-z_4) A_f(-z_3,z_4) A_0(-z_2,z_3) A_0(-z_1,z_2)\big]_{\text{odd}}. \\
\end{split}
\end{equation*}
And for $(n,m)=(2,2)$ we have:
\begin{equation*}
\begin{split}
& \sum_{i,j,k,l>0:\text{ odd} }
\frac{\pd^{4} F_f^B(\bm t^+,\bm t^-)}
{\pd t_i^+ \pd t_j^+  \pd t_{k}^- \pd t_l^- }
\Big|_{\bm t=0} \cdot
z_1^{-i} z_2^{-j} z_3^{k} z_4^l \\
 =
& \big[ 8 A_0(z_1,-z_2) A_f(z_2,-z_3) A_0(z_3,-z_4) A_f(-z_1,z_4) \\
& -8 A_0(z_1,-z_2) A_f(z_2,-z_4) A_0(z_3,-z_4) A_f(-z_1,z_3) \\
& -8 A_f(z_1,-z_3) A_f(-z_2,z_3) A_f(z_2,-z_4) A_f(-z_1,z_4) \\
& -8 A_f(z_1,-z_3) A_0(z_3,-z_4) A_f(-z_2,z_4) A_0(-z_1,z_2) \\
& -8 A_f(z_1,-z_4) A_f(-z_2,z_4) A_f(z_2,-z_3) A_f(-z_1,z_3) \\
& +8 A_f(z_1,-z_4) A_0(-z_3,z_4) A_f(-z_2,z_3) A_0(-z_1,z_2)
\big]_{\text{odd}}. \\
\end{split}
\end{equation*}
\end{Example}

\section{Application to Connected Spin Double Hurwitz Numbers}
\label{sec-spinH}

In this section,
we apply Theorem \ref{thm-main-conn-nmpt} to the computations of
connected spin double Hurwitz numbers.

The ordinary Hurwitz numbers \cite{hur} count the numbers of branched covers between
Riemann surfaces with specified ramification types.
In \cite{eop},
Eskin-Okounkov-Pandharipande introduced a new type of Hurwitz numbers
called spin Hurwitz numbers,
by introducing a spin structure (or theta characteristic) on the source.
Similar to the case of ordinary Hurwitz numbers (see \cite{pa, Ok1}),
the generating series of disconnected spin Hurwitz numbers are controlled by some integrable hierarchies,
see \cite{le, gkl}.
In particular,
Giacchetto-Kramer-Lewa\'nski \cite{gkl} showed that the generating series
\be
\tau^{r,\vartheta}(\bm t^+,\bm t^-) = \sum_{g;\mu^+,\mu^-}
2^{g-1} \beta^b h_{g;\mu^+,\mu^-}^{\bullet, r,\vartheta}
\frac{p^+_{\mu^+} p^-_{\mu^-}}{l(\mu^+)! l(\mu^-)!}
\ee
of disconnected spin double Hurwitz numbers $h_{g;\mu^+,\mu^-}^{\bullet, r,\vartheta} $
with $(r+1)$-completed cycles (where $r$ is even)
is a tau-function of the 2-BKP hierarchy,
where the summation is over all integers $g$ (genus) and odd partitions $\mu^\pm$.
Here for a partition $\mu = (\mu_1,\cdots,\mu_l)$,
the number $l(\mu) = l$ denotes its length,
and
\begin{equation*}
p_\mu^\pm = p_{\mu_1}^\pm p_{\mu_2}^\pm \cdots p_{\mu_l}^\pm,
\qquad
\text{ where $p_n^\pm = n\cdot t_n^\pm$.}
\end{equation*}
The number $b$ of $(r+1)$-completed cycle ramification points
is determined by the Riemann-Hurwitz formula:
\be
\label{eq-R-H}
b= (2g-2 +l(\mu^+) +l(\mu^-)) /r.
\ee
They found a fermonic representation of this tau-function
(see \cite[Theorem 6.20]{gkl};
see also \cite{le}):
\be
\tau^{r,\vartheta} =
\langle \Gamma_+ (\bm t^+)
 \exp\big(\beta\frac{\hF_{r+1}}{r+1}\big)
 \Gamma_-(\bm t^-) \rangle,
\ee
where $\hF_{r+1}$ is the following operator on the fermionic Fock space:
\be
\hF_{r+1}
= \sum_{k>0} (-1)^k k^{r+1} :\phi_k\phi_{-k}:.
\ee

Now our goal is to compute the connected spin double Hurwitz numbers
with $(r+1)$-completed cycles
$h_{g;\mu^+,\mu^-}^{\circ, r,\vartheta}$,
whose generating series
\be
\sum_{g;\mu^+,\mu^-}
2^{g-1} \beta^b h_{g;\mu^+,\mu^-}^{\circ, r,\vartheta}
\frac{p^+_{\mu^+} p^-_{\mu^-}}{l(\mu^+)! l(\mu^-)!}
\ee
is the free energy $\log \tau^{r,\vartheta}$.
In this case,
we take the function $f$ (see \S \ref{sec-ferm-2n2m}) to be:
\be
f^{r,\vartheta}(k) = \beta\cdot \frac{k^{r+1}}{r+1},
\ee
then $\hf^{r,\vartheta} = \beta \frac{\hF_{r+1}}{r+1}$.
In this case,
the series $A_f(z,w)$ defined by \eqref{eq-def-Af} is:
\be
\label{eq-def-Af-rtheta}
A_{f^{r,\vartheta}} (w,z) = -\frac{1}{4} -\half \sum_{k=1}^\infty
(-1)^k e^{ \beta \frac{k^{r+1}}{r+1} }w^{-k}z^k.
\ee
Denote:
\be
h_{\mu^+,\mu^-}^{\circ, r,\vartheta}(\beta) =
\sum_b 2^{g-1} \beta^b
\frac{h_{g;\mu^+,\mu^-}^{\circ, r,\vartheta}}
{l(\mu^+)! l(\mu^-)!},
\ee
then by Theorem \ref{thm-main-conn-nmpt} we have:
\begin{Theorem}
\label{thm-spinH}
Let $\mu^+ =(\mu_1^+,\cdots,\mu_n^+)$ and $\mu^- = (\mu_1^-,\cdots,\mu_m^-)$
be two odd partitions,
then we have:
\begin{equation*}
h_{\mu^+,\mu^-}^{\circ, r,\vartheta}(\beta) = -
\frac{2^{n+m-1}}{Z_{\mu^+} Z_{\mu^-}}
\text{Coeff}_{\prod\limits_{a=1}^n z_a^{-\mu_a^+} \prod\limits_{b=1}^m z_{n+b}^{\mu_b^-}}
\bigg(
\sum_{\text{$(n+m)$-cycles} }
\prod_{i=1}^{n+m} \xi( z_{\sigma(i)}, - z_{\sigma(i+1)})
\bigg),
\end{equation*}
where $\text{Coeff}$ means taking the coefficient,
and for a partition $\mu=(1^{m_1(\mu)} 2^{m_2(\mu)}\cdots)$ we denote:
\be
Z_\mu = \prod_{j\geq 1} m_j(\mu)! \cdot j^{m_j(\mu)}.
\ee
Here $\xi(\epsilon_{\sigma(i)} z_{\sigma(i)}, -\epsilon_{\sigma(i+1)} z_{\sigma(i+1)})$ are as follows:
for $\sigma(i) < \sigma(i+1)$,
\be
\xi( z_{\sigma(i)}, - z_{\sigma(i+1)})\\
= \begin{cases}
A_{f^{r,\vartheta}} \big(
 z_{\sigma(i)}, - z_{\sigma(i+1)}
\big),
& \text{ if $\sigma(i)\leq n<\sigma(i+1)$;}\\
A_0 \big(
 z_{\sigma(i)}, - z_{\sigma(i+1)}
\big),
& \text{ otherwise.}
\end{cases}
\ee
And $\xi( z_{\sigma(i)}, - z_{\sigma(i+1)})
= - \xi( - z_{\sigma(i+1)},  z_{\sigma(i)})$
if $\sigma(i) > \sigma(i+1)$.
Here $A_{f^{r,\vartheta}}$ is given by \eqref{eq-def-Af-rtheta},
and we use the convention $\sigma(n+m+1)=\sigma(1)$.
\end{Theorem}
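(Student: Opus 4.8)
The plan is to recognize the generating series $\tau^{r,\vartheta}$ as a special instance of the diagonal $2$-BKP tau-function $\tau_f^B$ of \eqref{eq-def-tauf} and then to read off the spin Hurwitz numbers from the coefficient extraction in Theorem \ref{thm-main-conn-nmpt-2}. First I would note that $\hf^{r,\vartheta} = \beta\,\hF_{r+1}/(r+1) = \sum_{k>0}(-1)^k f^{r,\vartheta}(k):\phi_k\phi_{-k}:$ with $f^{r,\vartheta}(k) = \beta k^{r+1}/(r+1)$, so that $\tau^{r,\vartheta} = \tau_{f^{r,\vartheta}}^B$ in the sense of \eqref{eq-def-tauf}; correspondingly the series $A_f$ of \eqref{eq-def-Af} specializes to $A_{f^{r,\vartheta}}$ of \eqref{eq-def-Af-rtheta}. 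Since $\log\tau^{r,\vartheta}$ is by definition the generating series $\sum_{g;\nu^+,\nu^-} 2^{g-1}\beta^b h_{g;\nu^+,\nu^-}^{\circ,r,\vartheta}\,p_{\nu^+}^+ p_{\nu^-}^-/(l(\nu^+)!\,l(\nu^-)!)$ of connected spin double Hurwitz numbers, we have $F_{f^{r,\vartheta}}^B = \log\tau^{r,\vartheta}$, and Theorem \ref{thm-main-conn-nmpt-2} applies directly.

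Next I would invoke \eqref{eq-mainthm-sumcyclesB-2} with $n = l(\mu^+)$ and $m = l(\mu^-)$ and extract the coefficient of the monomial $\prod_{a=1}^n z_a^{-\mu_a^+}\prod_{b=1}^m z_{n+b}^{\mu_b^-}$ from both sides. On the left-hand side only the single summand with $(j_1,\dots,j_n) = (\mu_1^+,\dots,\mu_n^+)$ and $(k_1,\dots,k_m) = (\mu_1^-,\dots,\mu_m^-)$ contributes this monomial, so the extracted coefficient is $\pd^{n+m} F_{f^{r,\vartheta}}^B/\big(\pd t_{\mu_1^+}^+\cdots\pd t_{\mu_n^+}^+\,\pd t_{\mu_1^-}^-\cdots\pd t_{\mu_m^-}^-\big)\big|_{\bm t=0}$. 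Because every part $\mu_a^+,\mu_b^-$ is odd, this monomial has odd degree in each variable, so the restriction $[\cdot]_{\text{odd}}$ on the right-hand side of \eqref{eq-mainthm-sumcyclesB-2} may be discarded after extraction; the $\delta$-correction terms there vanish when $n,m\ge 1$ and otherwise only enter coefficients of the shape $z_a^{-i}z_b^{i}$, which are disjoint from the monomial we extract, so they never interfere.

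Finally I would translate the free-energy derivative into the spin Hurwitz number itself by the substitution $p_n^\pm = n\,t_n^\pm$. Writing $\mu^+ = (1^{m_1(\mu^+)}2^{m_2(\mu^+)}\cdots)$ and likewise for $\mu^-$, one has $p_{\mu^+}^+ = \big(\prod_j j^{m_j(\mu^+)}\big)\prod_j (t_j^+)^{m_j(\mu^+)}$, so applying $\pd^n/\big(\pd t_{\mu_1^+}^+\cdots\pd t_{\mu_n^+}^+\big)$ to $p_{\mu^+}^+$ and setting $\bm t=0$ gives $\prod_j j^{m_j(\mu^+)}\cdot\prod_j m_j(\mu^+)! = Z_{\mu^+}$ (and likewise $Z_{\mu^-}$ for the negative times), while every other monomial $p_{\nu^+}^+ p_{\nu^-}^-$ with $(\nu^+,\nu^-)\ne(\mu^+,\mu^-)$ is killed. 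Hence the extracted left-hand coefficient equals $Z_{\mu^+}Z_{\mu^-}\sum_b 2^{g-1}\beta^b h_{g;\mu^+,\mu^-}^{\circ,r,\vartheta}/(l(\mu^+)!\,l(\mu^-)!) = Z_{\mu^+}Z_{\mu^-}\,h_{\mu^+,\mu^-}^{\circ,r,\vartheta}(\beta)$, whereas the matching coefficient on the right-hand side of \eqref{eq-mainthm-sumcyclesB-2} is $-2^{n+m-1}\,\text{Coeff}_{\prod_a z_a^{-\mu_a^+}\prod_b z_{n+b}^{\mu_b^-}}\big(\sum_{(n+m)\text{-cycles}}\prod_{i=1}^{n+m}\xi(z_{\sigma(i)},-z_{\sigma(i+1)})\big)$, with $\xi$ built from $A_{f^{r,\vartheta}}$ and $A_0$ exactly as stated. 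Dividing by $Z_{\mu^+}Z_{\mu^-}$ gives the identity. The computation is essentially bookkeeping; the only point demanding care is this coefficient extraction --- checking that exactly one summand survives on the left and that the normalization factors reassemble precisely into $Z_\mu = \prod_j m_j(\mu)!\,j^{m_j(\mu)}$ --- which is the main (mild) obstacle.
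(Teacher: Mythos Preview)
Your proposal is correct and follows exactly the approach the paper takes: specialize $f$ to $f^{r,\vartheta}$ so that $\tau^{r,\vartheta}=\tau_{f^{r,\vartheta}}^B$, then apply Theorem~\ref{thm-main-conn-nmpt-2} and extract the coefficient of $\prod_a z_a^{-\mu_a^+}\prod_b z_{n+b}^{\mu_b^-}$. In fact you supply more detail than the paper, which simply writes ``then by Theorem~\ref{thm-main-conn-nmpt} we have'' and states the result; your verification that the derivative of $p_{\mu^+}^+ p_{\mu^-}^-$ produces exactly the factor $Z_{\mu^+}Z_{\mu^-}$, and your observation that the $\delta$-correction terms and the $[\cdot]_{\text{odd}}$ projection are harmless for odd partitions with $n,m\ge 1$, are precisely the bookkeeping the paper leaves implicit.
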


\begin{Example}
For $(n,m)=(1,1)$,
we have:
\begin{equation*}
\begin{split}
 h_{(u),(v)}^{\circ,r,\vartheta}(\beta)
= & \frac{1}{uv} \text{Coeff}_{z_1^{-u}z_2^v}
\big(
2A_{f^{r,\vartheta}}(z_1,-z_2) A_{f^{r,\vartheta}}(-z_1,z_2)
\big)\\
= & \frac{\delta_{u,v}}{4uv} \big( 1-(-1)^u \big) \Big(
e^{\beta \frac{u^{r+1}}{r+1}}
+ \sum_{j=1}^{u-1} e^{\beta( \frac{j^{r+1}+(u-j)^{r+1}}{r+1} ) }
\Big)\\
=& \frac{\delta_{u,v}}{2uv} \Big(
e^{\beta \frac{u^{r+1}}{r+1}}
+ \sum_{j=1}^{u-1} e^{\beta( \frac{j^{r+1}+(u-j)^{r+1}}{r+1} ) }
\Big),
\end{split}
\end{equation*}
since $u$ is odd.
Or more explicitly,
\begin{equation*}
h_{g;(u),(u)}^{\circ,r,\vartheta} =
 \frac{1}{2^g \cdot u^2 \cdot b!(r+1)^b} \Big( u^{b(r+1)}
 +\sum_{j=1}^u \big( j^{r+1} + (u-j)^{r+1} \big)^b \Big),
\end{equation*}
where $b$ and $g$ are related by \eqref{eq-R-H}.
\end{Example}

\begin{Example}
Consider $(n,m) = (2,2)$.
Let $\mu^+ = (u_1,u_2)$, $\mu^- =(v_1,v_2)$ be two odd partitions with
$|\mu^+| = |\mu^-|$,
then Theorem \ref{thm-spinH} gives
(here we omit the details,
and for simplicity we have assumed $u_1\leq v_1$ in the computations):
\begin{equation*}
\begin{split}
& h_{(u_1,u_2),(v_1,v_2)}^{\circ,r,\vartheta}(\beta)
=  \frac{1}{(1+\delta_{u_1,u_2})(1+\delta_{v_1,v_2})u_1 u_2 v_1 v_2} \times\\
&\quad
\Big( \sum_{i=1}^{v_2-1} e^{\frac{\beta}{r+1} ( i^{r+1} + (u_1+u_2-i)^{r+1} ) }
+  (\half - \frac{\delta_{u_1,v_1}}{4}) e^{\frac{\beta}{r+1} ( v_1^{r+1}+v_2^{r+1} ) }
+  \half e^{\frac{\beta}{r+1} (u_1+u_2)^{r+1} }
 \\
&\quad
 - \half \sum_{l=1}^{v_2-1} e^{\frac{\beta}{r+1} ( l^{r+1} +(u_1-l)^{r+1}+ (v_2-l)^{r+1}+ (v_1-u_1+l)^{r+1} ) }\\
&\quad
 - \half \sum_{j=1}^{v_2-1} e^{\frac{\beta}{r+1} ( j^{r+1} +(u_2-j)^{r+1}+ (v_2-j)^{r+1}+ (v_1-u_2+j)^{r+1} ) }\\
&\quad
 -\frac{1}{2}e^{\frac{\beta}{r+1} ((u_2-v_2)^{r+1} + u_1^{r+1} + v_2^{r+1})}
-\frac{1}{2}e^{\frac{\beta}{r+1} ((v_1-u_2)^{r+1} + u_2^{r+1} + v_2^{r+1})}
\Big).
\end{split}
\end{equation*}

\end{Example}

\vspace{.2in}
{\em Acknowledgements}.
We thank the anonymous referee for helpful suggestions that improve the presentation of this paper.
We also thank Professor Huijun Fan, Professor Xiaobo Liu, and Professor Jian Zhou for encouragement.


\begin{thebibliography}{9}


\bibitem{AS}Alexandrov A, Shadrin S. Elements of spin Hurwitz theory: closed algebraic formulas, blobbed topological recursion, and a proof of the Giacchetto-Kramer-Lewanski conjecture. arXiv preprint arXiv:2105.12493, 2021.

\bibitem{by}Balogh F, Yang D. Geometric interpretation of Zhou's explicit formula for the Witten-Kontsevich tau function. Letters in Mathematical Physics, 2017, 107(10):1837-1857.

\bibitem{djkm}Date E, Jimbo M, Kashiwara M, Miwa T. Transformation groups for soliton equations IV. A new hierarchy of soliton equations of KP-type. Physica D, 1982, 4(3):343-365.

\bibitem{djm}Date E, Jimbo M, Miwa T. Solitons: Differential equations, symmetries and infnite dimensional algebras. Cambridge University Press, 2000.

\bibitem{dij}Dijkgraaf R. Mirror Symmetry and Elliptic Curves. Moduli Space of Curves, 1995, 129:149-164.

\bibitem{elsv1}Ekedahl T, Lando S, Shapiro M, Vainshtein A. On Hurwitz numbers and Hodge integrals. 1999, 328(12):1175-1180.

\bibitem{elsv2}Ekedahl T, Lando S, Shapiro M, Vainshtein A. Hurwitz numbers and intersections on moduli spaces of curves. Inventiones Mathematicae, 2001, 146(2): 297-327.

\bibitem{eop}Eskin A, Okounkov A, Pandharipande R. The theta characteristic of a branched covering. Adv. Math., 2008, 217: 873-888.

\bibitem{gkl}Giacchetto A, Kramer R, Lewa\'nski D. A new spin on Hurwitz theory and ELSV via theta characteristics. arXiv preprint arXiv:2104.05697, 2021.

\bibitem{gkls}Giacchetto A, Kramer R, Lewa\'nski D, Sauvaget A. The Spin Gromov-Witten/Hurwitz correspondence for $\bP^1$. arXiv preprint arXiv:2208.03259, 2022.

\bibitem{gjv}Goulden I P, Jackson D M, Vakil R. Towards the geometry of double Hurwitz numbers. Adv. Math., 2005, 198(1): 43-92.

\bibitem{gv}Graber T, Vakil R. Hodge integrals and hurwitz numbers via virtual localization. Compositio Math, 2003, 135(1):25-36.

\bibitem{gu}Gunningham S. Spin Hurwitz numbers and topological quantum field theory. Geometry \& Topology, 2016, 20: 1859-1907.

\bibitem{hb}Harnad J, Balogh F. Tau Functions and their Applications. Cambridge Monographs on Mathematical Physics. Cambridge University Press, 2021.

\bibitem{hur}Hurwitz A. \"Uber die Anzahl der Riemann'schen Fl\"achen mit gegebenen Verzweigungspunkten. Math. Ann., 1902, 55:53-66.

\bibitem{jm}Jimbo M, Miwa T. Solitons and Infinite-Dimensional Lie Algebras. Publications of the Research Institute for Mathematical Sciences, 1983, 1983(19): 943-1001.

\bibitem{jo}Johnson P. Double Hurwitz numbers via the infinite wedge. Transactions of the American Mathematical Society, 2015, 367(9): 6415-6440.

\bibitem{le2}Lee J. A note on Gunningham's formula. Bull. Aust. Math. Soc, 2018, 98:389-401.

\bibitem{le}Lee J. A square root of Hurwitz numbers. Manuscripta Mathematica, 2020,  162(1-2): 99-113.

\bibitem{mmn}Mironov A, Morozov A, Natanzon S. Cut-and-join structure and integrability for spin Hurwitz numbers. European Physical Journal C, 2020, 80(2).

\bibitem{mmno}Mironov A, Morozov A, Natanzon S, Orlov A. Around spin Hurwitz numbers. Letters in Mathematical Physics, 2021, 111:124.

\bibitem{Ok1}Okounkov A. Toda equations for Hurwitz numbers. Math. Res. Lett., 2000, 7:447-453.

\bibitem{op}Okounkov A, Pandharipande R. Gromov-Witten theory, Hurwitz theory, and completed cycles. Annals of Mathematics, 2006, 163(2):517-560.

\bibitem{op2}Okounkov A, Pandharipande R. The equivariant Gromow-Witten theory of $\bP^1$. Annals of Mathematics, 2006, 163(2):561-605.

\bibitem{ost}Orlov A Y, Shiota T, Takasaki K. Pfaffian structures and certain solutions to BKP hierarchies I. Sums over partitions. arXiv preprint arXiv:1201.4518, 2012.

\bibitem{pa}Pandharipande R. The Toda equations and the Gromov-Witten theory of the Riemann sphere. Letters in Mathematical Physics, 2000, 53(1):59-74.

\bibitem{sa}Sato M. Soliton Equations as Dynamical Systems on an Infinite Dimensional Grassmann Manifold. RIMS Kokyuroku, 1981, 439:30-46.

\bibitem{sw}Segal G, Wilson G. Loop Groups and Equations of KdV Type. Publications Math\'ematiques de l'IH\'ES, 1985, 61(1):5-65.

\bibitem{ssz}Shadrin S, Spitz L, Zvonkine D. On double Hurwitz numbers with completed cycles. Journal of the London Mathematical Society, 2012, 86(2):407-432.

\bibitem{ta}Takebe T. Toda lattice hierarchy and conservation laws. Communications in Mathematical Physics, 1990, 129(2):281-318.

\bibitem{tu}Tu M H. On the BKP hierarchy: additional symmetries, Fay identity and Adler-Shiota-van Moerbeke formula. Letters in Mathematical Physics, 2007, 81(2):93-105.

\bibitem{ut}Ueno K, Takasaki K. Toda lattice hierarchy. Advanced Studies in Pure Mathematics 4, 1984:1-95.

\bibitem{va}van de Leur J. The Adler-Shiota-van Moerbeke formula for the BKP hierarchy. Journal of Mathematical Physics, 1995, 36:4940-4951.

\bibitem{wy}Wang Z, Yang C. BKP Hierarchy, Affine Coordinates, and a Formula for Connected Bosonic $N$-Point Functions. Letters in Mathematical Physics, 2022, 112:62.

\bibitem{wy2}Wang Z, Yang C. Diagonal Tau-Functions of 2D Toda Lattice Hierarchy, Connected $(n,m)$-Point Functions, and Double Hurwitz Numbers. arXiv preprint arXiv:2210.08712, 2022.

\bibitem{yo}You Y. Polynomial solutions of the BKP hierarchy and projective representations of symmetric groups. Infinite-dimensional Lie algebras and groups, Luminy-Marseille, Adv. Ser. Math. Phys., Volume 7: 449-464.

\bibitem{zhou3}Zhou J. Explicit Formula for Witten-Kontsevich Tau-Function. arXiv preprint arXiv:1306.5429, 2013.

\bibitem{zhou1}Zhou J. Emergent geometry and mirror symmetry of a point. arXiv preprint arXiv:1507.01679, 2015.





\end{thebibliography}
\end{document}